\documentclass[10pt,journal,compsoc]{IEEEtran}
    
	\usepackage{amssymb}
	\setcounter{tocdepth}{3}
	\usepackage{graphicx}
	\usepackage{epstopdf}
	\usepackage{gensymb}
	\usepackage{color}
	\usepackage{amsmath}
	\usepackage{enumerate}
	\usepackage{cite}
	\usepackage{comment}
	\usepackage{amsthm}
	\usepackage{enumitem}
	\theoremstyle{definition}

    \usepackage{float}
    \usepackage{subfigure}
	\usepackage{soul}
	
	\usepackage[framemethod=TikZ]{mdframed}
	\usepackage[justification=centering]{caption}
	\usepackage{algorithmicx}
	\usepackage[ruled]{algorithm}
	\usepackage{algpseudocode}
	\alglanguage{pseudocode}
    
    \usepackage{bbding}
    
	\usepackage{array}
	\usepackage{booktabs}
	\usepackage{multirow}

	\usepackage{threeparttable} 
	\usepackage{hyperref}
	\newcommand{\tabincell}[2]{\begin{tabular}{@{}#1@{}}#2\end{tabular}}
	\usepackage{mathtools}

	\usepackage{url}
	\urldef{\mailsa}\path|{alfred.hofmann, ursula.barth, ingrid.haas, frank.holzwarth,|
		\urldef{\mailsb}\path|anna.kramer, leonie.kunz, christine.reiss, nicole.sator,|
		\urldef{\mailsc}\path|erika.siebert-cole, peter.strasser, lncs}@springer.com|

\newtheorem{theorem}{\textbf{\emph{Theorem}}}

\definecolor{garrisonpink1}{rgb}{0.858, 0.188, 0.478}

\newcommand{\mypara}[1]{\vspace{2pt}\noindent\textbf{{#1: }}}
	\hyphenation{op-tical net-works semi-conduc-tor}

\definecolor{PangLHpink2}{rgb}{0.5, 0.5, 1}

\definecolor{yjr}{rgb}{0.977, 0, 0}

\begin{document}

\title{MUD-PQFed: Towards \textbf{M}alicious \textbf{U}ser \textbf{D}etection in \textbf{P}rivacy-Preserving \textbf{Q}uantized \textbf{Fed}erated Learning}

\author{

Hua Ma\IEEEauthorrefmark{1}, Qun Li\IEEEauthorrefmark{1}, Yifeng Zheng\IEEEauthorrefmark{2}, Zhi Zhang, Xiaoning Liu, Yansong Gao\IEEEauthorrefmark{2} , \\ Said F. Al-Sarawi , Derek Abbott .

\IEEEcompsocitemizethanks{\IEEEcompsocthanksitem \IEEEauthorrefmark{1}H. Ma and \IEEEauthorrefmark{1}Q. Li contributed equally to the study.}

\IEEEcompsocitemizethanks{\IEEEcompsocthanksitem \IEEEauthorrefmark{2} Corresponding authors.}

\IEEEcompsocitemizethanks{\IEEEcompsocthanksitem H.~Ma, S.~Al-Sarawi, and D.~Abbott are with the School of Electrical and Electronic Engineering, The University of Adelaide, Australia. H.~Ma is also with Data61, CSIRO. \{hua.ma;said.alsarawi;derek.abbott\}@adelaide.edu.au}

\IEEEcompsocitemizethanks{\IEEEcompsocthanksitem Q.~Li and Y.~Gao are with the School of Computer Science and Engineering, Nanjing University of Science and Technology, China. \{120106222757;yansong.gao\}@njust.edu.cn}

\IEEEcompsocitemizethanks{\IEEEcompsocthanksitem Y.~Zheng is with the School of Computer Science and Technology, Harbin Institute of Technology, Shenzhen, Guangdong, China.
yifeng.zheng@hit.edu.cn}

\IEEEcompsocitemizethanks{\IEEEcompsocthanksitem X.~Liu is with the School of Computing Technologies, RMIT University, Australia.
xiaoning.trust@gmail.com}

\IEEEcompsocitemizethanks{\IEEEcompsocthanksitem Z.~Zhang is with Data61, CSIRO, Australia. zhi.zhang@data61.csiro.au}

}

\IEEEtitleabstractindextext{		
\begin{abstract}

Federated learning (FL), as a distributed machine learning paradigm, has been adapted to mitigate privacy concerns of clients. Attributing to its advancement, institutions (i.e., hospitals) with sensitive data leverage FL to collaboratively train a global model without transmitting their raw data. Despite attractive, there exists various inference attacks that can exploit the shared plaintext model updates embedding traces of clients private information, causing severe privacy concerns. 
To alleviate such privacy concerns, cryptographic techniques such as secure multi-party computation and homomorphic encryption have been incorporated for privacy-preserving FL. However, this inevitably exacerbates security concerns once clients are malicious to launch attacks, in particular, corrupting the model to ruin the main impetus of benign clients in FL. Those benign clients aim to gain a better global model by contributing their computational and communicational resources and their local data for local model updates. Such security issues in privacy-preserving FL, however, lack elucidation and are under-explored.
This work presents the first attempt towards elucidating the trivialness of performing model corruption attacks against lightweight secret sharing based privacy-preserving FL. We consider the scenario where the model updates are \textit{quantized} to reduce the communication overhead in this case, the adversary can simply provide local parameters out of the small legitimate range to corrupt the model.
We then propose MUD-PQFed, a protocol that can precisely detect malicious clients who performed the attack and enforce fair punishments. By deleting the contributions from the detected malicious clients, the global model utility is preserved as comparable to the baseline global model in absence of the attack. Extensive experiments validate the efficacy in terms of retaining the baseline accuracy and effectiveness in terms of detecting malicious clients in a fine-grained manner.
\end{abstract}

\begin{IEEEkeywords}
Federated learning, Privacy-preserving, Quantization, Model poisoning attack, Model corruption.
\end{IEEEkeywords}}

\maketitle

\section{Introduction}\label{Sec:Intro}

Federated learning (FL)~\cite{yang2019federated,gao2021evaluation,li2021survey,zhang2022robust} is one of the most popular distributed machine learning techniques. It aims to mitigate the privacy concerns when sensitive data has to be aggregated by a centralized training party to train models in the conventional centralized training paradigm. 
FL collaboratively trains a global model among multiple clients without accessing the local raw data, and thus greatly motivates distributed clients to contribute to gain a global model with better performance than that being trained merely on its limited local data. This is especially the case for the cross-silo FL where all clients will participate in each round of aggregation, instead of only selecting a fraction of clients to participate in each round~\cite{zhang2020batchcrypt,kairouz2021advances}. 
This is particularly suitable for the clients that are institutional organizations (i.e., hospitals) with limited data at local and wish to collaboratively train a more accurate model.

In order to obtain higher accuracy and encourage more clients with high-quality resources (e.g., data, computing resources) to contribute to FL, incentive mechanisms are always taken into FL~\cite{zhan2021survey}.
For example, \cite{kang2019incentive} proposes an effective incentive mechanism combining reputation and contract theory to motivate clients with high-quality data to participate in FL. In the meantime, in order to prevent malicious clients or low-quality clients intentionally degrading the aggregation quality, it is also important to punish these clients. Therefore, incentive schemes for FL should not only reward honest clients, but also penalize malicious clients once being identified~\cite{gao2022fgfl}.

\vspace{0.2cm}
\noindent{\bf Privacy Inference Attack.} In each round of the FL training, each client independently updates the local model trained over its local data, and transmits the updated local model to the central server who serves as a coordinator. The server aggregates all local models and updates the global model through, e.g., \textsf{FedAvg}~\cite{mcmahan2017communication,gao2020end}. Albeit the FL can greatly mitigate privacy leakages from the raw data, it is still vulnerable to advanced privacy inference attacks such as membership inference attack~\cite{nasr2019comprehensive}, data inversion attack~\cite{geiping2020inverting}, property inference attacks~\cite{ganju2018property} and preference profiling attacks~\cite{zhou2022ppa} when the local model communicates in plaintext with the server.

\vspace{0.2cm}
\noindent{\bf Communication Overhead.} Generally, a local model's parameters are downloaded/uploaded in full-precision by default, which results in large communication overhead to the FL. The overhead is further exacerbated in a scenario where a large number of clients are involved, and each local model contains millions or even tens of millions of parameters, rendering somehow unbearable communication overhead. 

A practical solution to the excessive communication overhead caused by full-precision parameters in FL is to quantize the parameters, namely quantized FL. As such, ternary FL~\cite{xu2020ternary} and ultimately binary FL~\cite{yang2021communication} have recently been proposed to substantially reduce the communication overhead. In addition, \textit{an inadvertent security merit of the quantized FL is that it has constrained the legitimate parameter change degree} if an malicious client does tamper the parameters \textit{in plaintext}. For example, tenary parameter can only have three legitimate values \{-1, 0, 1\}, which restricts the legitimate parameter manipulation degree.

\vspace{0.2cm}
\noindent{\bf Security Dilemma of Cryptographic Privacy-Preserving FL.} In order to fundamentally protect clients' privacy (e.g., preventing privacy inference attacks), cryptographic designs have been proposed in which model parameters are encrypted into ciphertext and the aggregation is also performed over ciphertexts to prevent direct access to  local model parameters in plaintext, no matter whether the FL is in full-precision ~\cite{xu2019verifynet,hao2019efficient} or quantized~\cite{wen2017terngrad}.

However, such cryptographic aggregation designs render security attacks much easier, e.g., corrupting the global model convergence to cause denial of service  (DoS). As the privacy-preserving FL designs are merely focusing on privacy but not security attack perspective, the security dilemma of the privacy-preserving FL is much less elucidated (see details in Section~\ref{sec:FL}). 

\vspace{0.2cm}
\noindent{\bf Privacy-Preserving Quantized FL.} Although it greatly reduces communication overhead, it is inevitably threatened by security attacks, resulting in worrisome deployments. This is exacerbated by the fact that the malicious clients can easily corrupt the model without obeying the legitimate parameter range.
 
Consider ternary FL as an example,
since the parameters are values from \{-1, 0, 1\}, malicious clients only need to make some illegitimate changes, such as changing -1 to -5, which will affect the correctness of the aggregated model such as preventing the global model from converging. 

It is imperative to identify malicious clients in the quantized privacy-preserving FL, then remove and penalize them to ensure the utility of the global model without falsely punishing normal clients. However, there is no existing robust privacy-preserving FL scheme that has considered such \textit{trivial but realistic model corruption} attack caused by illegitimate parameter unique to \textit{quantized FL}.

\vspace{0.2cm}
\noindent{\bf Our Contributions.} In this work, we propose the first protocol, coined as MUD-PQFed, that is able to accurately detect malicious clients who corrupt the global model by submitting illegitimate parameters~\cite{fang2020local} to their local models. MUD-PQFed works for the quantized privacy-preserving FL where the local model parameters are in the form of ciphertexts. In addition,  MUD-PQFed can efficiently mitigate the attack effects by removing the detected malicious clients' poisoned (alike fault injection to some extent) local models. The main contributions/results of our work are summarized as follows:
  \begin{itemize}
    \item We have revealed through experiments that for the quantized FL, an attack to corrupt a global model by local model poisoning ~\cite{fang2020local}) in the ciphertext form is trivial to succeed. A small change (e.g., the number of model parameters or/and the degree per parameter) will have a substantial impact on the accuracy of the global model, and thus make it completely corrupted.
    \item We have then proposed MUD-PQFed, the first lightweight malicious client detection scheme for quantized FL in ciphertext building upon secret sharing with a single-server setting. Our scheme groups clients based on the hypermesh to correctly identify malicious clients, and then delete their adverse contributions to retain the baseline utility of the global model.
    \item We have conducted extensive experiments, validating that MUD-PQFed can effectively detect malicious clients (i.e., 100\% true positive rate), thus ensuring the accuracy of privacy-preserving quantized FL comparable to its baseline accuracy without the attacks. The capability of correctly identifying malicious clients can complement the incentive mechanism so as to punish those clients in privacy-preserving quantized FL.
  \end{itemize}

\mypara{Paper Organization} Section~\ref{sec:background} provides necessary backgrounds. Preliminaries of hypermesh and secret sharing for building the MUD-PQFed are given in Section~\ref{sec:preliminary}. Section~\ref{sec:security Attack} presents security attacks, particularly, model corruption attacks, on privacy-preserving quantized FL. Section~\ref{sec:system model} elaborates on the overall design of MUD-PQFed. Extensive experiments are conducted in Section~\ref{sec:experiments} to validate the effectiveness and efficiency of the MUD-PQFed. Further discussions are made in Section~\ref{sec:discussion}, followed by the conclusion in Section~\ref{sec:conclusion}.

\section{Background}\label{sec:background}

\subsection{Federated Learning}\label{sec:FL}

F\textbf{}ederated learning (FL)~\cite{li2014scaling,konevcny2016federated} is a distributed machine learning technique. It enables $n_c$ distributed clients to collaboratively train a global model. In each round, client $i$ with $i \in \{1,..., n^i_c\}$ uses its own local private dataset $\mathcal{D}_i$ for local model training, and sends the obtained gradient ${\bf w}_i$ to the server who acts as the coordinator. 
After receiving the gradients from all clients, according to the most used \textsf{FedAvg}~\cite{konevcny2016federated}, the server aggregates them and updates the global model parameters as:
\begin{equation}
    \tilde{\bf w}=\sum_{i=1}^{n_c}{\frac{|\mathcal{D}_i| }{|\mathcal{D}|}}\textbf{w}_i,
\end{equation}
where $|\mathcal{D}|=\sum\limits_{i=1}^{n_c}|\mathcal{D}_i|$ with $|\cdot|$ the cardinality. Then the server sends $\tilde{\bf w}$ to all $n_c$ clients for next round. Although the server is not allowed to access the client's private data, it is still possible to infer clients' sensitive information through inference attacks ~\cite{nasr2019comprehensive,luo2021feature}. 

\vspace{0.2cm}
\noindent{\bf Privacy-Preserving FL.} Commonly used techniques to further protect privacy in FL include differential privacy (DP)~\cite{geyer2017differentially,mugunthan2020privacyfl}, homomorphic encryption (HE)~\cite{zhang2020batchcrypt,ma2022privacy,hardy2017private} and secure multi-party computation (MPC)~\cite{xu2019hybridalpha}. The DP is less effective to protect privacy compared with the other two cryptographic techniques. In addition, DP suffers (unbearable) accuracy drops.

Therefore, this paper mainly focuses on cryptography-enabled privacy-preserving FL, where the local model parameters are in the ciphertext form, which is illustrated in Fig.~\ref{fig:flmodel}. Generally, the cryptographic means can be divided into two classes: multi-party computation based, and homomorphic encryption based. The former usually uses secret sharing, while the latter uses fully homomorphic encryption (FHE) or partially homomorphic encryption (PHE).

\vspace{0.2cm}
\noindent\textit{HE based.} Hao \textit{et al.}~\cite{hao2019efficient} used FHE to achieve secure aggregation of gradients. The local gradient vector is first perturbed using a distributed Gaussian mechanism. Then, the perturbed gradient vector is encrypted into the BGV ciphertext (called the inner ciphertext).
Finally, the ciphertext encrypted by BGV is embedded in the augmented learning with error (A-LWE) ciphertext (called external ciphertext) to realize the secure FL aggregation protocol. 
Ma \textit{et al.}~\cite{ma2022privacy} proposed a multi-key homomorphic encryption scheme that is resistant to attacks from curious internal actors as well as external malicious adversaries while achieving accuracy protection. In this scheme, the client encrypts the gradient using the aggregated public key. To get the plaintext from this encrypted sum, the server asks all clients to calculate their decrypted shares, and finally, the server merges them with the ciphertext to decrypt the encrypted sum of all gradients. However, applying HE to FL incurs substantial computational and communicational overhead~\cite{fereidooni2021safelearn}.

\vspace{0.2cm}
\noindent\textit{MPC based.} As a scheme that can effectively protect privacy without causing too much extra overhead, secure multi-party computation is increasingly used in FL. Xu \textit{et al.} ~\cite{xu2019hybridalpha} proposed a protocol that is simple, efficient, and resilient to withdrawing participants by applying MPC to FL. In the Chain-PPFL scheme proposed by Li \textit{et al.}~\cite{li2020privacy}, the client is organized into a chain structure. The server sends a random number to the first client in each chain. The first client in each chain adds the random number as a mask to its gradient, and then randomly sends the result to a neighbor. The neighbor client encrypts its gradient as a mask. Finally, the last client sends the final result to the server.

\begin{figure}[t!]
    \centering
    \includegraphics[width=0.4\textwidth]{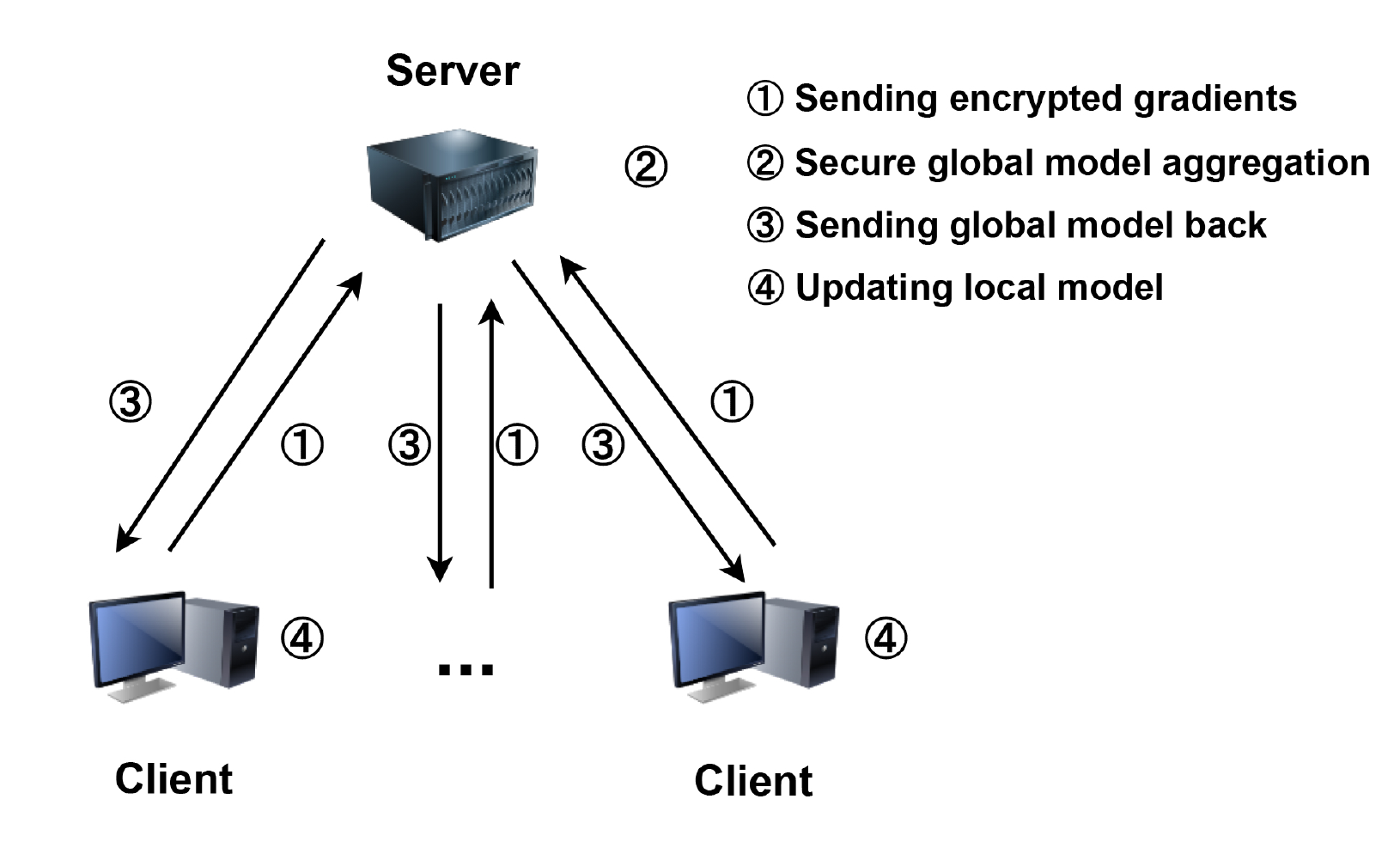}
    \caption{FL aggregation in ciphertext.}
    \label{fig:flmodel}
\end{figure}

\vspace{0.2cm}
\noindent{\bf Security Robustness.} There are merely few attentions paid to intrinsic security threats under the privacy-preserving FL. Bhagoji \textit{et al.}~\cite{bhagoji2019analyzing} demonstrated that targeted model poisoning against FL is also effective for models with Byzantine-resilient aggregation rules. 
Jiang \textit{et al.}~\cite{jiang2022comprehensive} showed that for vertical partition FL, private set intersection cannot resist reconstruction attacks; as a defense, when the DP privacy budget is large, it cannot effectively resist attacks.
Xu \textit{et al.}~\cite{xu2020privacy} proposed a scheme to reduce the influence of irregular clients on training results in ciphertext state. The scheme first calculates the reliability of each client, and then determines the proportion of the client in the final aggregated result according to the reliability. It can be seen as a method of resisting poisoning attacks in the state of ciphertext to some extent, but this scheme incurs high computational overhead due to the usage of HE and is incapable of identifying irregular clients. 

So \textit{et al.}~\cite{so2020byzantine} proposed BREA. Each client locally computes the pairwise distances between locally updated secret shares belonging to other clients, and sends the computed results to the server. Then the server collects computations from a sufficient number of clients, recovers pairwise distances between local updates, and performs client selection for model aggregation. However, this paper points out that relying on distance-based outliers to find malicious clients makes it difficult to distinguish whether large distances between local updates are due to non-IID data distributions or Byzantine attacks. In addition, the overall FL computational complexity of gaining these pairwise distances is up to \textsf{O}($n_c^2$) with $n_c$ the number of participated clients.
Liu \textit{et al.}~\cite{liu2021privacy} proposed the first scheme to detect poisoning under HE enabled ciphertext conditions. The scheme uses the computed median over ciphertext as a benchmark to calculate the Pearson correlation between other local model's parameters and the median to identify and exclude potential malicious clients.

Ma \textit{et al.}~\cite{ma2022shieldfl} proposed ShieldFL, a scheme that can detect poisoning attacks in the ciphertext state. The ShieldFL is based on secure cosine similarity on encrypted local gradients to distinguish malicious clients from normal clients. However, both \cite{liu2021privacy} and \cite{ma2022shieldfl} are based on privacy protection under HE, which will undoubtedly lead to very large overhead (i.e., violates the motivation of reducing communication overhead of quantized FL focused in this work) and the scalability for the number of clients is not high. In addition, even if malicious clients can be detected, both schemes leak other information about clients (such as Pearson correlation coefficient and cosine similarity), which may lead to other potential attacks such as data inversion attack and model stealing attack recognized in~\cite{liu2021privacy}. Moreover, these two studies~\cite{ma2022shieldfl,liu2021privacy} build upon two (non-colluding) servers, which interact with clients and communicate with each other to carry out a secure two-party protocol. Such a two-server setting is different from the common one-server FL setting.

In summary,  privacy-preserving (quantized) FL with attack robustness is still under-explored. More specifically, those solutions~\cite{so2020byzantine,ma2022shieldfl,liu2021privacy} are inapplicable for detecting malicious clients that submit illegitimate local model parameters in the quantized FL, which can easily render global model corruption.

\subsection{Quantized FL}

In FL, communication overhead is an important issue that could be a severe bottleneck if the available bandwidth is restricted. This is because in FL, the number of clients could be very large, such as millions of smartphones. When full-precision parameters are used to transmit gradient information in the communication process, the large number of clients will incur unbearable communication overhead. 
Besides, a large number of parameters are communicated between the server and the client, but the throughput of the communication channel is limited~\cite{mcmahan2017communication,li2020federated,chen2020joint}. Moreover, as aforementioned, in order to protect the privacy of the FL aggregation process, some schemes adopt homomorphic encryption ~\cite{liu2020secure,cheng2021secureboost} or secure multiparty computation (SMC)~\cite{mohassel2018aby3,mohassel2017secureml} for privacy protection, which exacerbates the communication overhead. 

Therefore, in order to reduce communication overhead and alleviate channel bandwidth pressure and throughput, quantized FL emerges~\cite{reisizadeh2020fedpaq,xu2020ternary,yang2021communication}. The Fedpaq protocol proposed by Reisizadeh \textit{et al.}~\cite{reisizadeh2020fedpaq} allows the clients to quantize their gradients first, and then send it to the server, which reduces communication overhead with an accuracy trade-off. The T-FedAvg protocol proposed by Xu \textit{et al.}~\cite{xu2020ternary} quantizes the gradient to -1, 0 and 1 through a layer-by-layer quantization method, which greatly reduces the communication overhead while maintaining high accuracy. Yang \textit{et al.}~\cite{yang2021communication} proposed the first binary FL framework. They utilize maximum likelihood estimation and auxiliary real-valued parameters to update gradient information. So they can enable binary state learning and reducing communication costs. Accuracy drop due to parameter binarization can be compensated by mixed-precision FL.

\section{Preliminaries}\label{sec:preliminary}
This section provides preliminaries of secret sharing and hypermesh, which will be used to construct MUD-PQFed.

\subsection{Secret Sharing}

Secret sharing is a technique in which the secret value can only be obtained by knowing all secret shares or a number of shares above a pre-defined threshold: the secret is divided into multiple shares~\cite{beimel2011secret}. More specifically, a secret $s$ is divided into $n$ unrelated parts of information, each part of the information is, namely, a sub-key, which is held by a different participant. The secret $s$ can only be recovered when at least $k$ sub-keys/shares are possessed. This scheme is $(k, n)$-secret partition threshold scheme, and $k$ is called the threshold value of the scheme. Secret sharing schemes are often incorporated within the multi-party secure computations.

\subsection{Hypermesh}
The grouping method of FL clients in our MUD-PQFed is built upon hypermesh~\cite{szymanski1995hypermeshes}. A $d^n$ hypermesh (where $n_c$ = $d^n$, $d \ge 2$, $n \ge 2$) consists of $n_c$ nodes. Each node is assigned an $n$-digit identifier $a_{n-1}a_{n-2}...a_0$ such that $a_i$ $\in$ $[0, d)$ for all $0 \leq i < n$. Two nodes are called neighbors if their identifiers differ by one bit. Nodes are connected by $d$-edges (i.e., edges with $d$ endpoints). That is to say, in a hypermesh, an edge has $d$ nodes, one node belongs to $n$ different edges at the same time, and there are $l = \frac{n_c \times n}{d}$  edges in the whole hypermesh. We give an example of a hypermesh in Fig. \ref{fig:hypermesh}. Each client is assigned an identifier by the server, for example the identifier for client 0 is '00'. Each edge corresponds to a group, and the hypermesh shown in Fig. \ref{fig:hypermesh} consists of a total of 16 clients and 8 groups.

\begin{figure}[t!]
    \centering
    \includegraphics[width=0.35\textwidth]{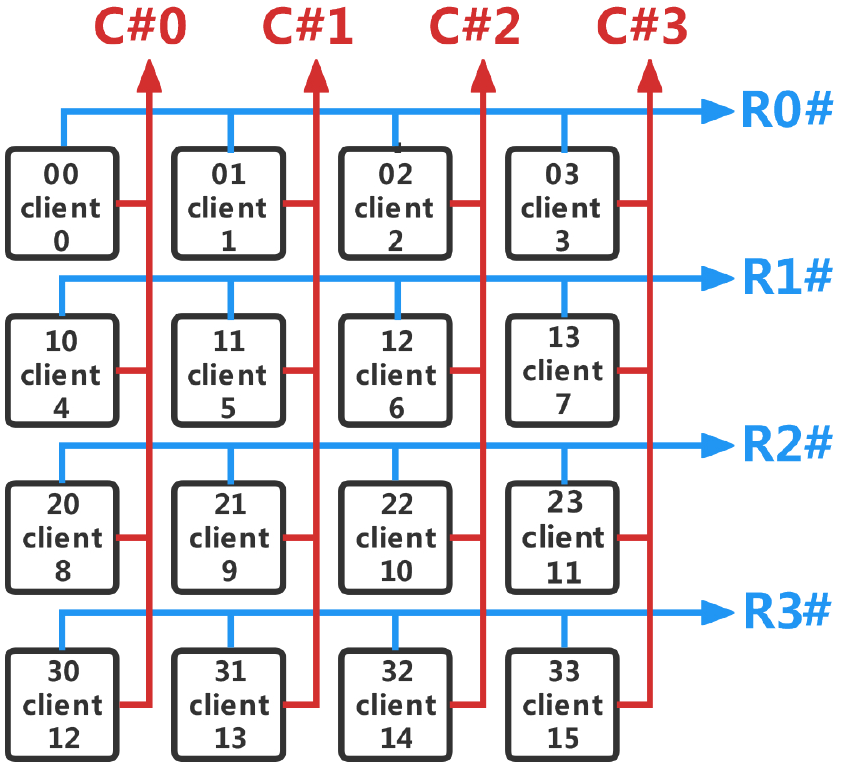}
    \caption{\centering An exemplified $d^n$-hypermesh, with $d=4$ and $n=2$.}
    \label{fig:hypermesh}
\end{figure}

\section{Security Attack on Privacy-Preserving Quantized FL}\label{sec:security Attack}

\begin{figure}[t!]
    \centering
    \includegraphics[width=0.45\textwidth]{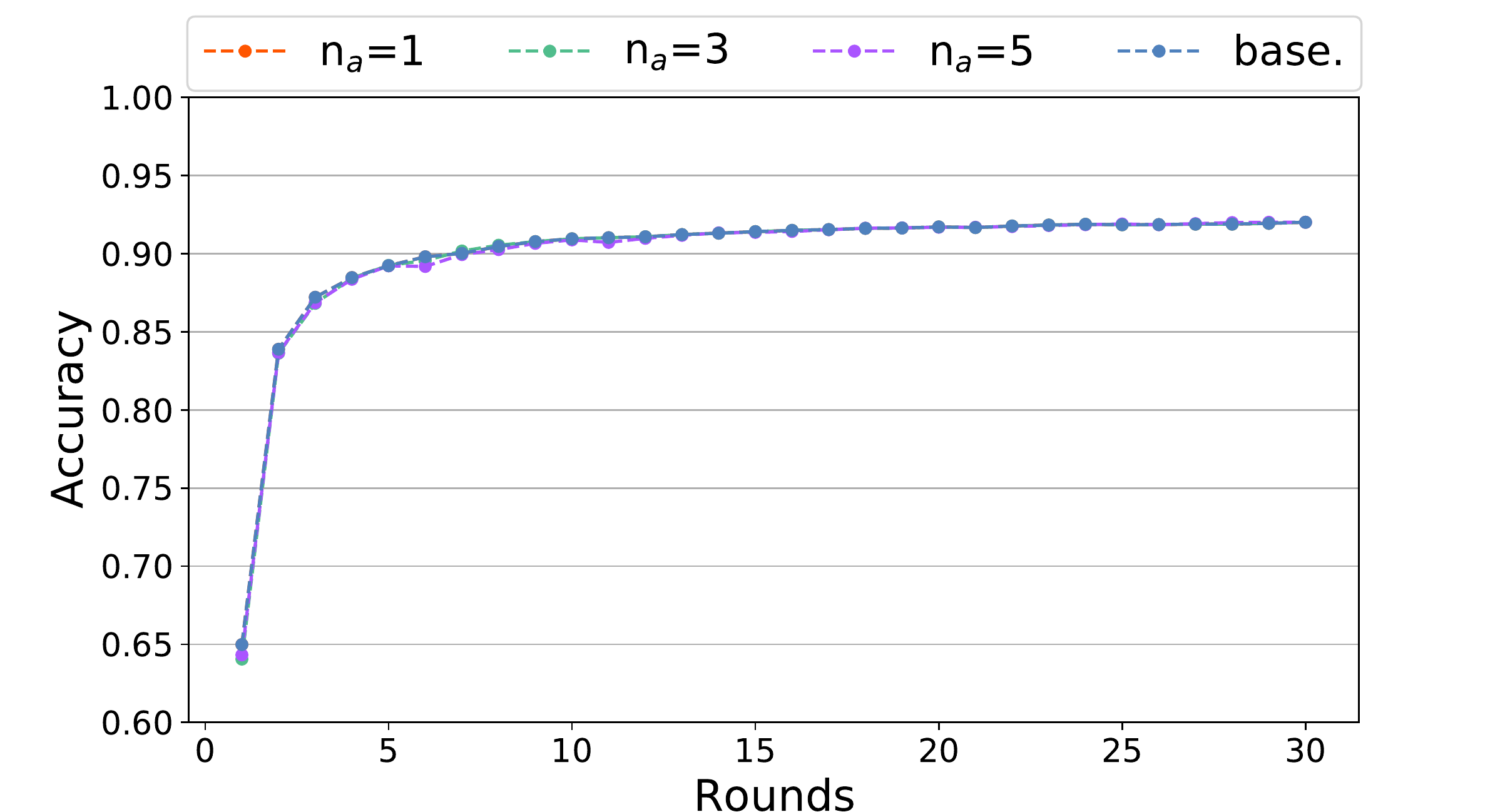}
    \caption{\centering Malicious clients randomly tamper parameters within legitimate range \{-1, 0, 1\}. The MNIST dataset is used.
    $n_a$ is the number of malicious clients.
    }
    \label{fig:range}
\end{figure}

\begin{figure}[t!]
    \centering
    \includegraphics[width=0.45\textwidth]{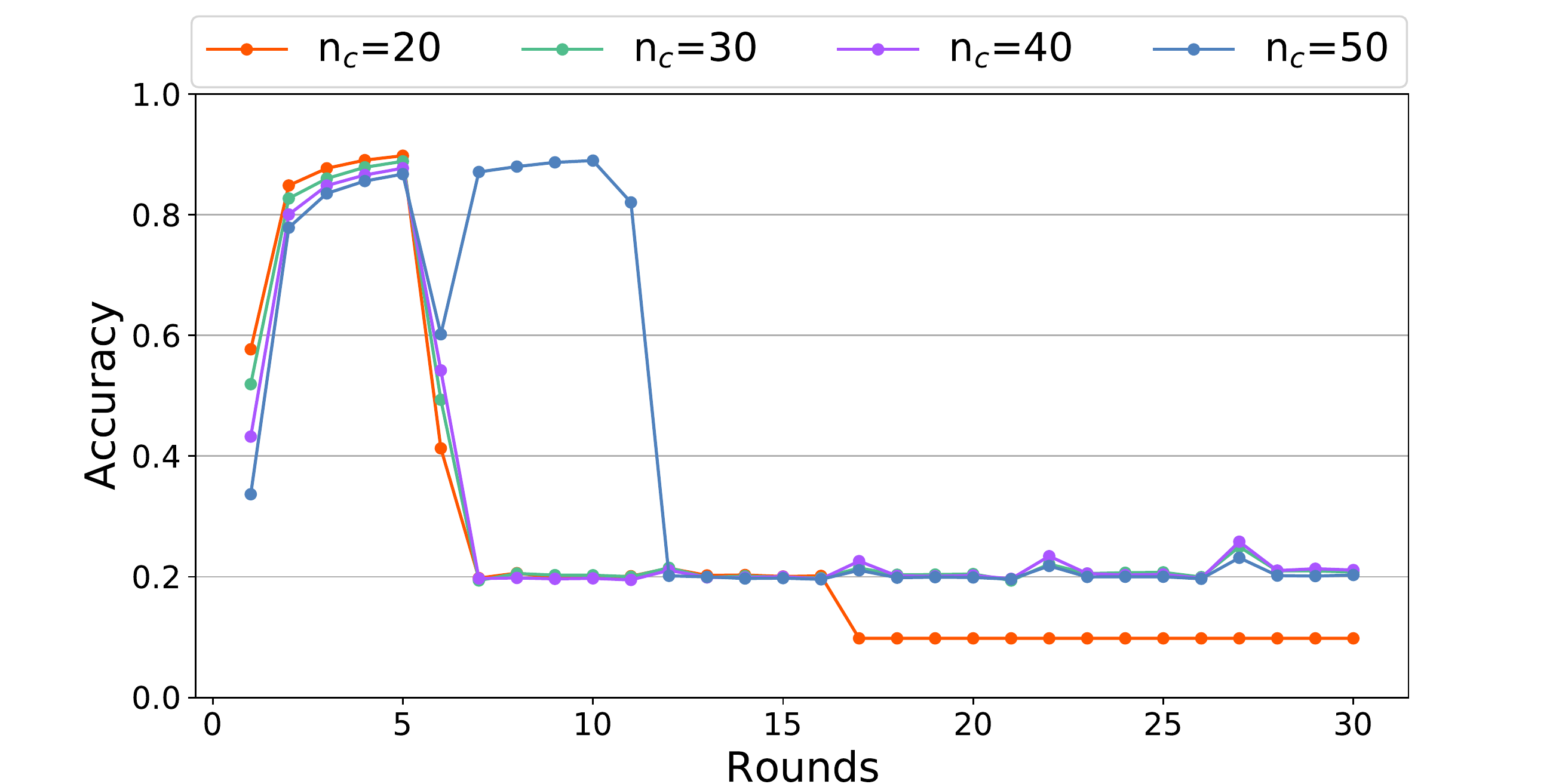}
    \caption{\centering Model corruption attack under ternary FL with only a malicious client---the worst-case from attacker perspective. The MNIST dataset is used. $n_c$ is total number of clients in the FL.}
    \label{fig:none}
\end{figure}

Quantized FL itself can substantially reduce the communication overhead and privacy leakage to some extent~\cite{qiang2021defending} in comparison with the full-precision FL counterpart. In addition, secure computation techniques can complement the quantized FL to fundamentally mitigate the privacy leakages. However, the security threats rooted in the cryptography based privacy-preserving quantized FL is not explored and elucidated. We here show that quantized FL unique model corruption attack via tampering the local model parameter in an illegitimate range is realistic and trivial. Such manipulation, unless carefully treated, can be easily achieved in the ciphertext domain.
Therefore, it is imperative to design a protocol to detect malicious clients in the quantized FL under ciphertext form, while retaining the lightweight feature of the quantized FL. Later we devise MUD-PQFed in Section~\ref{sec:system model} to cater for this critical demand.

\vspace{0.2cm}
\noindent{\bf Model Corruption Attack.} We conducted experiments on a ternary FL scheme in plaintext to demonstrate the trivialness of launching a global model corruption attack by simply tampering with the model parameters to be out of legitimate range. This is exacerbated by the fact that the attacker has been relaxed parameter manipulation range/degree. 

In the experiment, the ternary FL uses the MLP network (i.e., a series of fully connected layers) for training, with 24,330 parameters in total, following~\cite{xu2020ternary}. 
In Fig.~\ref{fig:range}, we show the accuracy when malicious clients can \textit{only make legitimate changes} in parameters within $\{-1,0,1\}$. The total number of clients is $n_c = 25$, and malicious clients change parameters in the $6_{\rm th}$, $11_{\rm th}$, $16_{\rm th}$, $21_{\rm st}$, and $26_{\rm th}$ rounds, respectively. We can see that even when the number of malicious clients is up 20\% ($n_a=5$), merely providing random legitimate parameters for a given layer of 600 parameters has no attack effect on corrupting the global model accuracy.  

We now assume that there is only one malicious client---worst-case from the attacker's perspective, and increase the value of same 600 parameters by (adding) 50 per parameter (i.e., out of legitimate ranges) in the same $6_{\rm th}$, $11_{\rm th}$, $16_{\rm th}$, $21_{\rm st}$, and $26_{\rm th}$ rounds, respectively. 
Fig.~\ref{fig:none} depicts the experimental results of the corruption attack. Here, {$n_c$} is the number of clients participating in the quantized FL.
It can be seen from the experimental results that global model received by rest clients (i.e., local model from the client's perspective) will have a large accuracy drop after this malicious client submits malicious values, and the smaller the number $n_c$, the greater the impact. 
Obviously, the more clients there are, the smaller the impact a single malicious client can cause, so the accuracy will decrease slowly. Similarly, a malicious client in the ciphertext state can perform the same operation to compromise the global model and evade the illegitimate range check.

\vspace{0.2cm}
\noindent{\bf Hardness of Malicious Client Detection.} Intuitively, it seems that for the quantized FL that only uses secure multi-party computation to achieve secure aggregation to output a global model, the server can determine whether each aggregated model parameter is within a reasonable threshold, and then determine whether there are malicious clients. However, this may not easily hold in practice.

For example, for binary FL, the parameter value is either 0 or 1. If 100 clients participate in the quantized FL and each client submits 1000 parameters, then the aggregated result should be in [0, 100] for each parameter. 
If the quantized FL is in plaintext, the server can determine whether there are malicious clients by checking whether the local parameter range is in [0, 100] per parameter per client. But this per parameter per client check is prohibited in ciphertext state. In addition, the attacker can hide the maliciously injected value within the normal range of [0, 100], e.g., by estimating the tampered parameter firstly. More specifically, the attacker can tamper with the parameter that has a higher chance to be 0 sent by majority of clients. In this context, the aggregated parameter will be still within legitimate range, rendering infeasibility of detecting malicious behavior. Even though the aggregated value does exceed normal ranges, the server can only determine that there are malicious behaviors, but cannot identify which client is malicious because the server can only examine the aggregated value from all clients in the MPC (in particular, using secret sharing) enabled privacy-preserving quantized FL.

As for quantized FL that uses homomorphic encryption for FL privacy protection, it is more vulnerable to such security attacks. Because the server cannot know the plaintext value of the aggregated value opposed to the MPC-based FL aggregation, thereof, it is infeasible to tell whether there are malicious clients relying on the aggregated values in ciphertext. For this reason, when designing MUD-PQFed, we choose to use MPC rather than HE, as the latter exacerbates the hardness of malicious client detection in addition to its higher computational and communicational overhead.

\section{The Design of MUD-PQFed}\label{sec:system model}
\subsection{Threat Model}
We consider a typical cross-silo FL~\cite{zhang2020batchcrypt,kairouz2021advances}, where all clients participate in all FL rounds once they decide to contribute to the global model training. In the MPC-based privacy-preserving FL, there is a server and $n_c$ clients with their local private data. There are $n_a$ out of $n_c$ clients being malicious clients (i.e., controlled by an attacker) who attempt to corrupt the global model by submitting malicious values.  Following related studies~\cite{xu2019hybridalpha,xu2019verifynet}, the rest clients as well as the server are honest but curious. That is, they will honestly follow the protocol specification but may try to infer additional sensitive information.

\vspace{0.2cm}
\noindent{\bf Goals.} The goal of the proposed MUD-PQFed is to detect malicious clients if they are performing the model corruption attack introduced above to disrupt all benign client's contributions in privacy-preserving quantized FL. MUD-PQFed is able to identify malicious clients in coarse-grained means and fine-grained means. The former means that the server can localize the malicious behaviors in a specific subset/group of clients, while the latter ultimately identifies the exact malicious clients. The fine-grained detection is more preferred from the incentive perspective, as the exact malicious clients will be penalized, instead of a subset/group of clients that may also consist of benign clients.

\subsection{System Overview}

MUD-PQFed achieves the above goals by delicately adapting a state-of-the-art privacy-preserving data aggregation technique \cite{dekker2021privacy} for quantized FL.
By novel integration of such robust and secure aggregation technique with quantized FL, MUD-PQFed presents the first systematic solution for privacy-preserving quantized FL with robustness against malicious clients in the \textit{single-server} setting.
We firstly present an overview of the MUD-PQFed system that has four stages, followed by elaborations on the implementation details per stage.

\begin{enumerate}
    \item \textbf{Stage 1: Registration.} At this stage, each client sends a message to the server indicating that they want to contribute to the FL. After the server receives the messages sent by $n_c$ clients, the server divides these clients into groups according to the hypermesh.
    
    \item \textbf{Stage 2: Submission.} At this stage, each client generates a secret shared value $s$ for each group (i.e., the summation of the secret shares is 0) in which the client belongs to mask the gradient information to be $m$, and generates a commitment $d$ of the secret shared value $s$. The client sends the masked gradient information $m$ that is the ciphertext and commitment $d$ to the server.

    \item \textbf{Stage 3: Detection.} At this stage, the server performs three rounds of checks to identify and remove potentially malicious clients. The server first determines whether the summation of the secret shared value is 0 by calculating the commitment value of each group, then checks whether the value sent by the client to each group is equal, and finally checks whether the gradient sum of each group is within a reasonable range.
    
    \item \textbf{Stage 4: Aggregation.} At this stage, the server performs aggregation on parameters of clients after excluding detected malicious clients to update the global model, which will be sent to the clients for next round FL global model update till convergence or a predefined condition.
\end{enumerate}
\subsection{Stage 1: Registration}

The registration phase is to group clients and prepare parameters required by the remaining stages. First, each client generates his/her own public and private key pair ($pk,sk$), and sends the public key $pk$ to the server, indicating that he/she wants to participate in the FL. The server selects the $n_c$ clients to contribute to the aggregation and a commitment generator $\textsf{g}$. Once these parameters are received by the server/clients, and $n_c$ clients are firmed, the server groups these clients according to the following grouping method.

\textbf{Grouping Clients.} The groups are divided according to the hypermesh, as shown in Fig.~\ref{fig:hypermesh}. The server chooses a $d^n$ hypermesh, where $n_c=d^n$, $d \ge 2$ and $n \ge 2$. The edges that is either a row or a column in the hypermesh are the groups that a client is in. 
For example, the client $0$ in Fig.~\ref{fig:hypermesh} belongs to two groups of C\#0 and R\#0. Each client has a unique identifier, and two clients whose identifiers differ by one bit are neighbors to each other, e.g., (client 0 and client 4) or (client 0 and client 1). 
The $n$ groups in which client $i$ belongs is denoted as $\mathcal{G}_i$ that is a set---each client belongs to two groups in the exemplified Fig.~\ref{fig:hypermesh}. The group in which $d$ clients belong are denoted as $\mathcal{U}_j$ that is a set too. And the neighbors of client $i$ are denoted as $\mathcal{N}_i$. 

In the registration stage, once clients are grouped, the clients will generate the parameters required for the remaining stages.

\textbf{Generating parameters.} The secret sharing is a main and also popular method to realize MPC~\cite{erkin2012private}, which we leveraged herein to enable the MPC-based privacy preserving FL. In order to minimize the increased communication overhead when any two clients have to communicate with each other for sharing parameters by establishing a secure communication channel. The server forwards and then broadcasts the message. 
So that each client only needs to establish a secure communication channel with the server. In this context, each client firstly generates his/her own key pair
($pk,sk$), and sends the public key $pk$ to the server. 
Once the information from $n_c$ clients has been collected, the server starts grouping. Then server sends public key ${pk_q|q \in \mathcal{N}_i}$ to each client that is the neighbor of client $i$. 
To reduce communication overhead, the public key can be reused through all FL training rounds to avoid public key generation and broadcasting per round. In each round $t$, client $i$ needs to generate a random number $r_{i\to q,t}$---($i\to q,t$) means that client $i$ generates and sends this random number for client $q$ at round $t$---for each neighbor $q$ $\in$ $\mathcal{N}_i$, encrypts this random number with the public key $pk_q$ and sends it to the server. The client encrypts the random number to prevent the server from knowing the real value of the random number. 
After the client $i$ receives random numbers from all neighbors, the client $i$ calculates the secret sharing value
\begin{equation}
s_{i,j,t}=\sum_{q \in \mathcal{G}_j}{r_{i\to q,t}-r_{q\to i,t}}.
\end{equation}
For each $j$ $\in$ $\mathcal{G}_j$, with this secret sharing method construction we can get 
\begin{equation}
\forall{j \in \mathcal{G}:}\sum_{i \in \mathcal{U}_j}{s_{i,j,t}=0}.    
\end{equation}
That is, summation of secret shared values of each group is 0.

\subsection{Stage 2: Submission}

The submission phase requires the client to send parameters to the server. Client $i$ adds its secret shared value $s_{i,j,t}$ to its own gradient information ${\bf w}_{i,j,t}$ to get ciphertext ${\bf m}_{i,j,t}$. To prevent clients from sending different ${\bf w}_{i,j,t}$ to different groups in the same round, the client must also make a commitment on its secret shared value. We use a simple computationally bound and computationally hidden commitment scheme. That is, the commitment value does not reveal any information about the message ${\bf m}$ and the receiver can determine that $\bf m$ fits the information corresponding to the commitment. 
To submit a ciphertext $\bf m$, a client needs to send $\textsf{g}^m$. So each client needs to calculate commitment ${\bf d}_{i,j,t}=\textsf{g}^{s_{i,j,t}}$ and then sends $\{({\bf m}_{i,j,t},{\bf d}_{i,j,t}) | j \in \mathcal{G}_i\}$ to the server.

\subsection{Stage 3: Detection}

The detection phase is to detect malicious clients and remove them. There are three rounds of detection. 

\textbf{First Round.} The server checks whether the sum of the secret shared values of all clients in each group is 0 by computing 
\begin{equation}
\label{eau4}
\prod_{i \in\mathcal{U}_j}{{\bf d}_{i,j,t}}= \prod_{i \in \mathcal{U}_j}{\textsf{g}^{s_{i,j,t}}}=\textsf{g}^{\sum_{i \in \mathcal{U}_j^{s_{i,j,t}}}}=\textsf{g}^0=1.  
\end{equation}
If the final result of a group $j$ is not 1, it means that at least one malicious client in group $j$ submitted the wrong secret shared value, so $j$ will be added to $\mathcal{V}$ that is a set used to record the groups of potential malicious clients.

\textbf{Second Round.} For each client, the server computes

\begin{equation}
\begin{aligned}
\label{equ5}
\{\textsf{g}^{m_{i,j,t}}({\bf d}_{i,j,t})^{-1}|j\in \mathcal{G}_i\}&=\{\textsf{g}^{w_{i,t}+s_{i,j,t}}\textsf{g}^{-s_{i,j,t}}|j\in \mathcal{G}_i\}\\
&=\{\textsf{g}^{w_{i,t}}|j\in \mathcal{G}_i\}
\end{aligned}
\end{equation}

to verify that the ${\bf w}_{i,t}$ sent by the client $i$ to each group is the same. If client $i$ fails the detection, all groups in $\mathcal{G}_i$ will be added to $\mathcal{V}$.

\textbf{Third Round.} For group $j$, the server computes
\begin{equation}
\tilde{\bf w}_{j,t}=\sum_{i\in \mathcal{U}_j}{{\bf c}_{i,j,t}}=\sum_{i\in \mathcal{U}_j}{({\bf w}_{i,t}+{\bf s}_{i,j,t})}=\sum_{i\in \mathcal{U}_j}{{\bf w}_{i,t}}.
\end{equation}
If the aggregate value $\tilde{\bf w}_{j,t}$ is not within a reasonable range, it means that at least one client in $j$ has submitted malicious gradient information, so all clients in $j$ will be added to $\mathcal{V}$.

We can identify malicious clients by checking which clients appear $n$ times in $\mathcal{V}$. 
This is because it is divided according to the $d^n$ hypermesh, each client will belong to $n$ groups at the same time, so the malicious client will belong to the $n$ groups at the same time. During detection, all the groups that the malicious client belongs to are added to $\mathcal{V}$. Therefore, malicious client(s) can be determined by counting which client(s) in $\mathcal{V}$ appear(s) in all $n$ groups \textit{at the same time}.

\subsection{Stage 4: Aggregation}

After detecting a malicious client, it is required to remove the submitted values of all groups where the malicious client belongs. Because as long as there is a malicious client in group $j$, the submitted value of this group $j$ is invalid, so the server has to remove the submitted value of the entire group $j$. After removal, the model parameter sum of each group is 
\begin{equation}
\tilde{\bf w}_t=\sum_{j\in \mathcal{G}\setminus \mathcal{V}}{\tilde{\bf w}_{j,t}}.
\end{equation}
Then the server calculates the number of valid clients $q$ according to the number $p$ of valid groups, where $q=p\times d$ and uses
\begin{equation}
\bar{\bf w_t}=\frac{\tilde{\bf w}_t}{q}    
\end{equation}
to update the global model parameter.

\section{Experiments}\label{sec:experiments}

\subsection{Setup}

Once our protocol selects $n_c$ clients to participate in the registration phase, the clients participating in the protocol will not be changed in all subsequent rounds, so our protocol satisfies cross-silo FL~\cite{zhang2020batchcrypt,kairouz2021advances}.

\vspace{0.2cm}
\noindent{\bf Dataset.} To validate the efficacy and effectiveness of our MUD-PQFed, we use two commonly used datasets, MNIST~\cite{lecun1998gradient} and CIFAR-10~\cite{chang2020multi}. Both datasets have been used in recent related work~\cite{dekker2021privacy,mills2019communication}. In all experiments, we divide the dataset equally according to the IID distribution following~\cite{so2020byzantine,mo2021ppfl}.

The MNIST dataset consists of handwritten digital images. There are 10 categories, corresponding to 10 Arabic numbers from `0' to `9'. The numbers of training and testing image samples are 60,000 and 10,000, respectively. Each gray image sample has a size of $28\times 28 \times 1$.

The CIFAR-10 dataset is a small-scaled dataset used to identify pervasive objects. There are 10 categories of RGB color images: airplane, automobile, bird, cat, deer, dog, frog, horse, ship and truck. The size of each image is $32\times 32 \times 3$, and there are 6000 images per category. In total, there are 50,000 training images and 10,000 testing images.

\vspace{0.2cm}
\noindent{\bf Environment.} The MUD-PQFed is implemented in Python. We run the experiments on a server with an AMD Ryzen 5 4600H CPU, 16GB RAM with the Windows 10 operating system. 

As for the used model architecture, we follow the same neural network architecture which are customized for ternary FL and binary FL neural networks and parameter settings as ~\cite{xu2020ternary} and ~\cite{yang2021communication}, respectively\footnote{The quantization can be applied layer-by-layer~\cite{xu2020ternary,yang2021communication}}.

We first evaluate the MUD-PQFed efficacy that is the accuracy when the model corruption attack is performed. We then evaluate the effectiveness that is the detection capability of the MUD-PQFed.
\subsection{Accuracy Evaluation}

All experiments in this part use $d^2$-hypermesh for grouping, while other $d^n$-hypermesh evaluations are deferred to Section~\ref{sec:DetCap}. To be precise, if the number $n_c$ of clients participating in the aggregation is 25, then the server divides the 25 clients according to the $5^2$-hypermesh. That is, each group has 5 clients, each client belongs to two groups at the same time. There are a total of 10 groups. 

\subsubsection{Ternary FL}

\textbf{Baseline:} The ternary FL by Xu \textit{et al.}~\cite{xu2020ternary} is used as a baseline. There are no malicious clients in this baseline, where all clients and the server honestly participate in the aggregation. The accuracy of the ternary and binary FL after convergence is about 92.75\% and 90.1\%, respectively.

\vspace{0.2cm}
\noindent{\bf MNIST.} For the experiments of MNIST in ternary FL, model architecture is a multiple layer perception (MLP) network. The number of parameters that each client needs to submit in each round of aggregation is 24,330. We assume that a malicious client has tampered the parameters in one of the layers consisting of 600 parameters. Each parameter is incremented by a random number sampled within [20, 30].
For instance, the original real parameter value is 1, and the malicious client will change it to any value of [21,31]. A total of 30 rounds FL training/aggregation are performed in our experiments. 
We assume that malicious clients start tampering from round $6_{\rm th}$ for every 5 rounds. 
To wit, the malicious client does not change parameter values in rest rounds. Fig.~\ref{fig:termnist1} (a) with \textit{one} malicious client and Fig.~\ref{fig:termnist1} (b)  with \textit{two} malicious clients show the relationship between the number of clients and accuracy for the MNIST dataset of ternary FL.
It can be seen that after the $6_{\rm th}$ round model corruption attack, if the malicious clients are not detected and removed, the accuracy after aggregation will decrease significantly.
In most cases, the accuracy cannot be recovered later, completely disrupts the usage of the global model and discourages benign clients contribution incentives. 
The fewer the number of clients---$n_a$ to $n_c$ proportion is higher, the more severe the attack. 
This is under expectation, when the ratio of $n_a$ to $n_c$ is small, the impact of a single client's adverse contribution to the total aggregated results will be less.
 
Once our MUD-PQFed scheme is applied, there is no drop in accuracy. Since the MUD-PQFed will remove the submitted values of malicious clients before FL aggregation after detecting malicious clients. So that the model corruption attack is prevented. 

The number of malicious clients (i.e., Fig.~\ref{fig:termnist1} with one and two malicious client(s), respectively) has different effects on the accuracy. 
The higher the number of malicious clients in the aggregation process, the worse the accuracy will be without MUD-PQFed---one round attack at the $6_{\rm th}$ can always successfully corrupt the model as in Fig.~\ref{fig:termnist1} (b) 
Because the greater the number of malicious clients, the more parameters will be tampered and received during the aggregation process. 
Nonetheless, once the MUD-PQFed is applied, the model corruption attack will be prevented, as the model accuracy witnesses no drop.

When there are two malicious clients, we have compared two different cases: both in the same group and two different groups.
Fig.~\ref{fig:termnist2} depicts the accuracy with/without applying MUD-PQFed.

From Fig.~\ref{fig:termnist2} (a), it can be clearly observed that model corruption attack succeeds regardless of two clients being in the same group or different groups---no notable differences in these two cases.

Because two malicious clients in any case tamper the same number of parameters, thus the global model damage tends to be same.

The results in Fig.~\ref{fig:termnist2} (b) demonstrate that our MUD-PQFed works well for both cases, as the accuracy drop caused by the malicious clients manipulated parameters has all been successfully recovered after removing these malicious clients. Next, we use an example to explain the rationale.

Suppose $n_c$=16, that is, the number of clients participating in the aggregation is 16, denoted as (0, 1, ... , 15). There will be eight groups in total, which are $G_0=[0,1,2,3]$, $G_1=[4,5,6,7]$, $G_2=[8,9,10,11]$, $G_3=[12,13,14,15]$, $G_4=[0,4,8,12]$, $G_5=[1,5,9,13]$, $G_6=[2,6,10,14]$, $G_7=[3,7,11,15]$. There are 32 copies and we call each copy a submitted value to ease description.
Suppose two malicious clients 0 and 1 are in the same group, the groups that need to be removed during aggregation will be $G_0$, $G_4$ and $G_5$ as they contain clients 0 and 1.
A total of 12 submitted values need to be deleted. Suppose two malicious clients 0 and 5 are not in the same group, the groups that need to be removed when aggregated are $G_0$, $G_1$, $G_4$ and $G_5$. A total of 16 submitted values need to be deleted.
Although two malicious clients have deleted 4 more submitted values when they are in different groups, there are still 16 remaining submitted values, which retains to be same as the number of submitted values of the baseline. 
It can be understood that the number of submitted values is 'saturated'. In other words, after removing 3 groups of submitted values, 20 commits are 'over saturated' when two malicious clients are in the same group. Therefore, our solution achieves the best accuracy in both cases of malicious clients in the same group and different groups, and there is no global model accuracy gap after malicious clients removal.

Fig.~\ref{fig:termnistbaseline} shows the accuracy of the baseline when there are no malicious clients and the accuracy when the number of MUD-PQFed malicious clients is 1. It can be seen that the accuracy of MUD-PQFed is basically the same as the baseline accuracy, which affirms the efficacy of our scheme.
 
It is also explained that whether the malicious clients are in the same group or not has no effect on the accuracy of our scheme. 

\vspace{0.2cm}
\noindent{\bf CIFAR-10.}
For the experiments of CIFAR-10 in ternary FL, the client uses a customized CNN with a number of parameters of 1,166,176 for training.

We assume that a malicious client has changed the parameters of one of the layers which has 18,432 parameters, and each parameter is incremented by a random number sampled in [20, 30]. Fig. \ref{fig:tercifar1} (a) depicts the accuracy when there is a single malicious client who can greatly degrade the global model accuracy without applying MUD-PQFed. 

Once the MUD-PQFed is applied, it successfully mitigates the accuracy drop.

Fig.~\ref{fig:tercifar1} (b) shows the accuracy when two malicious clients exist. It can be observed that the tendency is similar to the accuracy tendency of the case when there is only one malicious client. 
One malicious client changes 18,432 parameters, which is only 1.6\% out of the total number of 1,166,176 parameters, so when the number of malicious clients is 2, the accuracy does not drop to the point of being irrecoverable like the MNIST dataset. 
Nonetheless, the application of MUD-PQFed effectively eliminates the attack effects. Same as the MNIST, Fig.~\ref{fig:tercifar2} depicts the accuracy with/without applying MUD-PQFed for CIFAR-10. Similarly, we can see that if malicious clients are not removed, once malicious clients launch attack, the accuracy will drop. Using  MUD-PQFed can perfectly eliminate the influence of malicious clients.

\begin{figure}[t!]
    \centering
    \includegraphics[width=0.5 \textwidth]{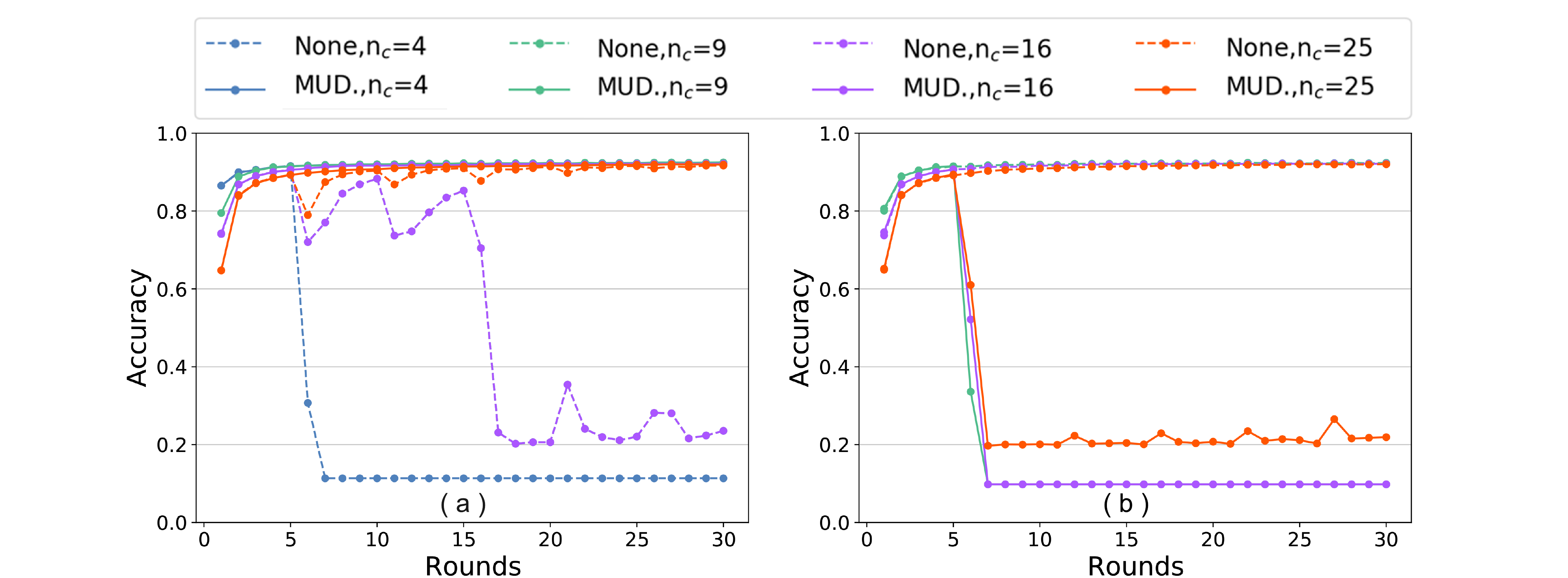}
    \caption{The relationship between accuracy and the number of malicious clients ((a) one malicious client and (b) two malicious clients) in ternary FL (MNIST). Note that ‘none’ means that the MUD-PQFed is not applied when there exists malicious clients. }
    \label{fig:termnist1}
\end{figure}

\begin{figure}[t!]
    \centering
    \includegraphics[width=0.5\textwidth]{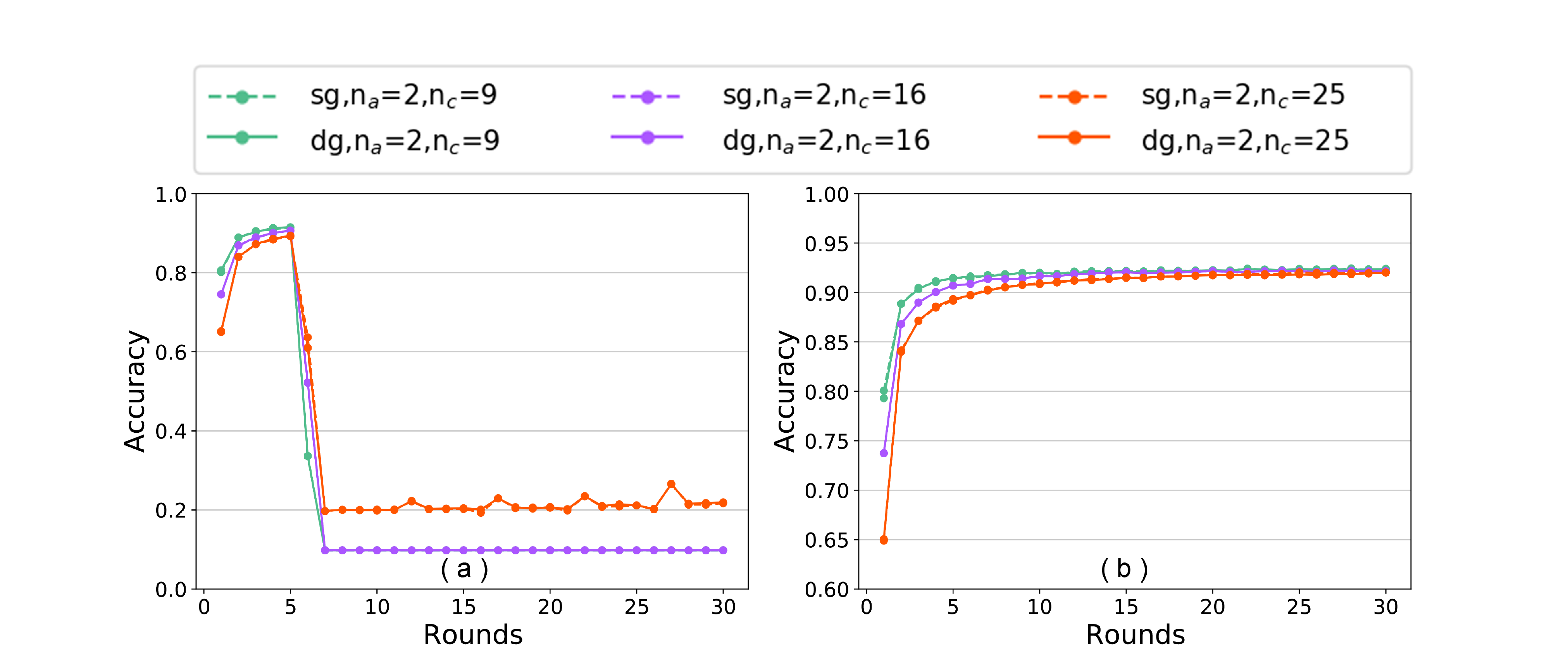}
    \caption{Accuracy  (a) without and  (b) with applying
MUD-PQFed in ternary FL (MNIST). Note that 'sg' means that two malicious clients in the same group and 'dg' means two malicious clients in the different group. }
    \label{fig:termnist2}
\end{figure}

\begin{figure}[t!]
    \centering
    \includegraphics[width=0.45\textwidth]{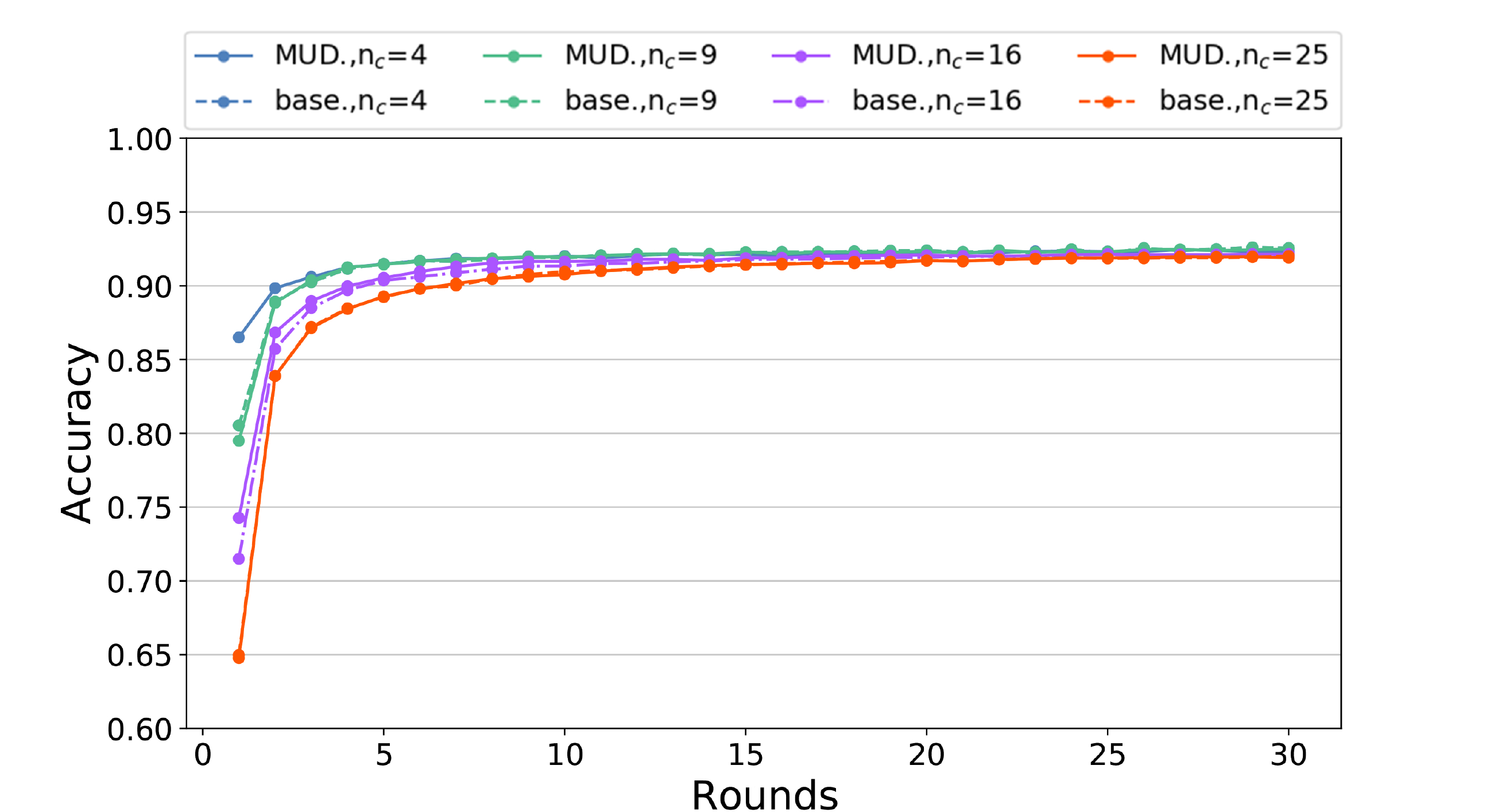}
    \caption{\centering Accuracy comparison of MUD-PQFed with baseline (MNIST).}
    \label{fig:termnistbaseline}
\end{figure}

\begin{figure}[t!]
    \centering
    \includegraphics[width=0.5\textwidth]{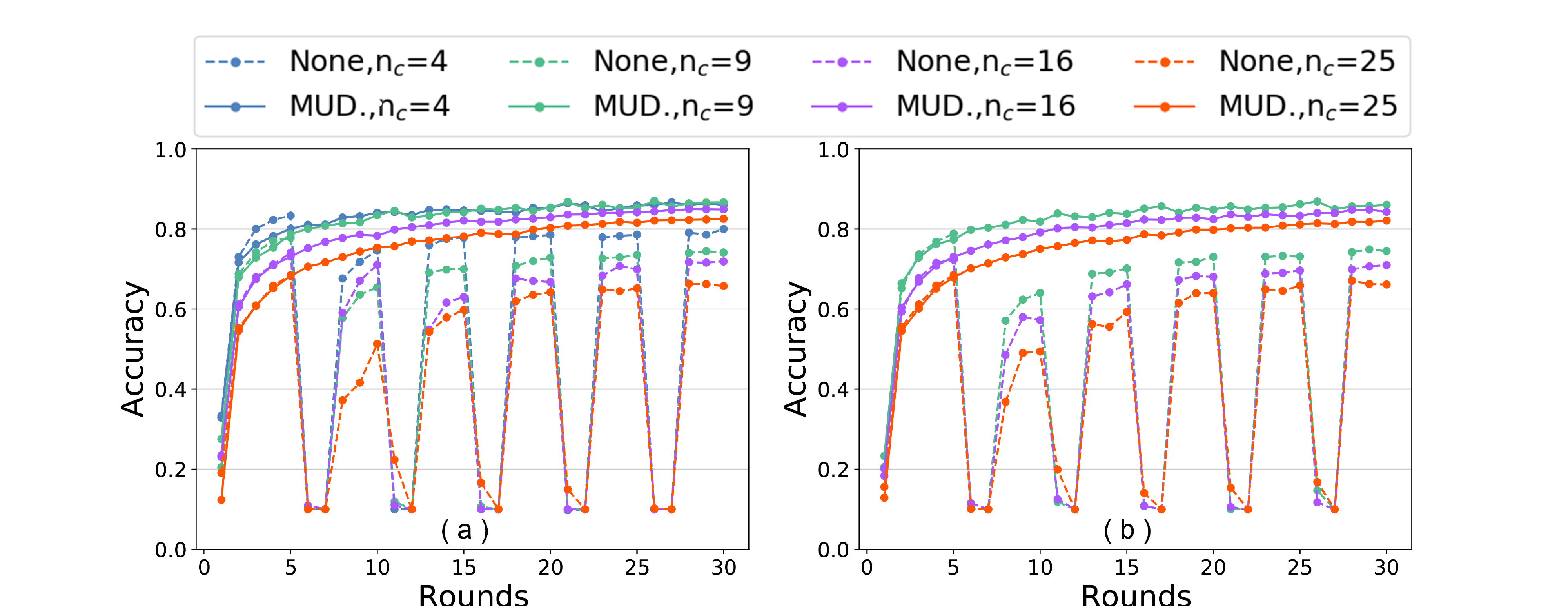}
    \caption{The relationship between accuracy and the number of malicious clients(one malicious client (a) and two malicious clients (b)) in ternary FL (CIFAR-10). }
    \label{fig:tercifar1}
\end{figure}

\begin{figure}[t!]
    \centering
    \includegraphics[width=0.5\textwidth]{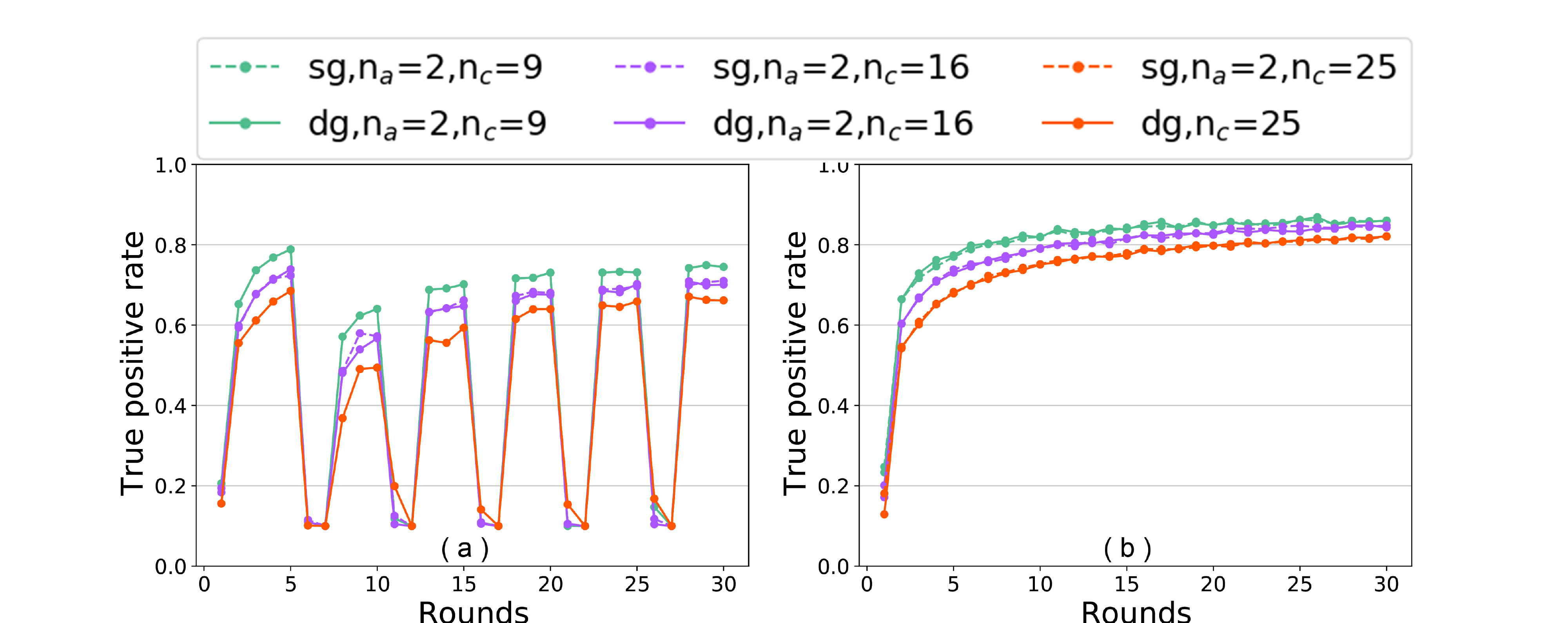}
    \caption{Accuracy  without (a) and  with (b) applying
MUD-PQFed in ternary FL (CIFAR-10). }
    \label{fig:tercifar2}
\end{figure}

\subsubsection{Binary FL}

\textbf{Baseline:} The binary FL by Yang \textit{et al.}~\cite{yang2021communication} is used as a baseline. Similar to ternary FL, the baseline accuracy is obtained in the absence of malicious clients.

We conducted experiments on the MNIST dataset. The number of parameters that each client needs to submit in each round of aggregation is 82,242 by following the same model architecture~\cite{yang2021communication} (i.e., a binarized neural network customized for the MNIST dataset). We assume that a malicious client changes parameters of one of the layers with 1,000 parameters: each parameter
is incremented by a random number sampled in [20, 30].
Fig. \ref{fig:binmnist} (a) shows the accuracy when there is only one malicious client. Similar to ternary FL, the malicious client can greatly degrade/corrupt the accuracy. Fig.~\ref{fig:binmnist} (b) shows the accuracy when two malicious clients exist. 
 
The Fig.~\ref{fig:binmnistbaseline} compares the accuracy after applying the MUD-PQFed with the baseline. Results in Fig.~\ref{fig:binmnist} and Fig.~\ref{fig:binmnistbaseline} further affirm the efficacy of the MUD-PQFed of mitigating the malicious clients attack effect when for the binary FL.

\begin{figure}[t!]
    \centering
    \includegraphics[width=0.5\textwidth]{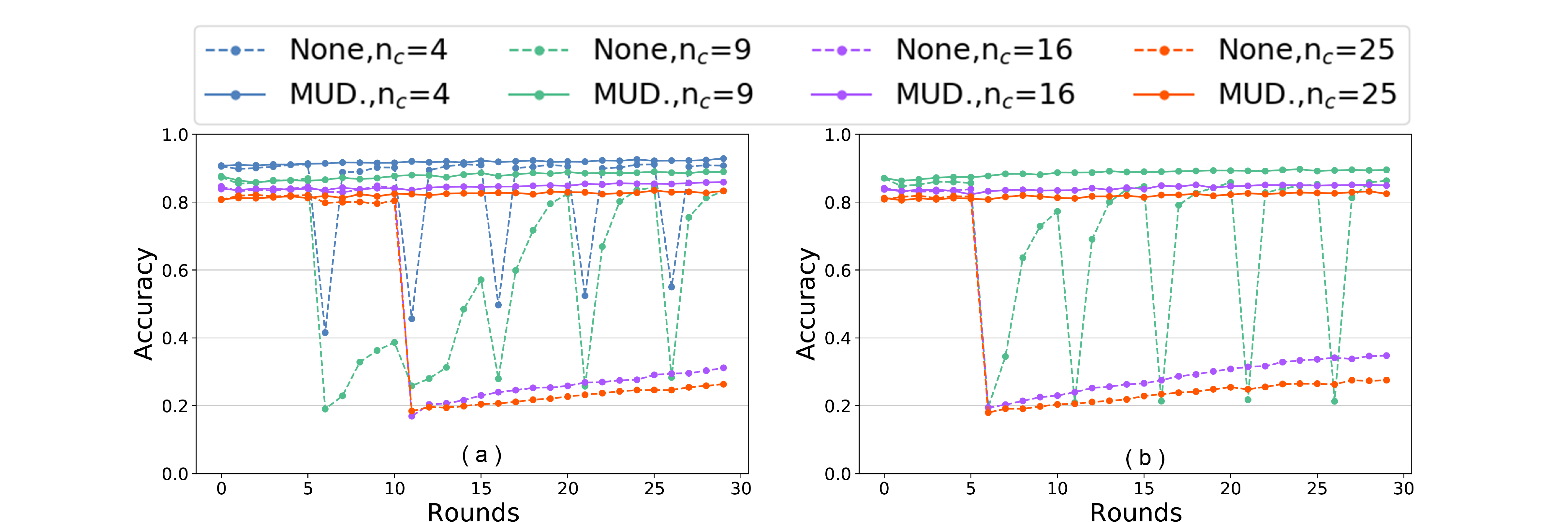}
    \caption{The relationship between accuracy and the number of malicious clients(one malicious client (a) and two malicious clients (b)) in binary FL (MNIST).}
    \label{fig:binmnist}
\end{figure}

\begin{figure}[t!]
    \centering
    \includegraphics[width=0.45\textwidth]{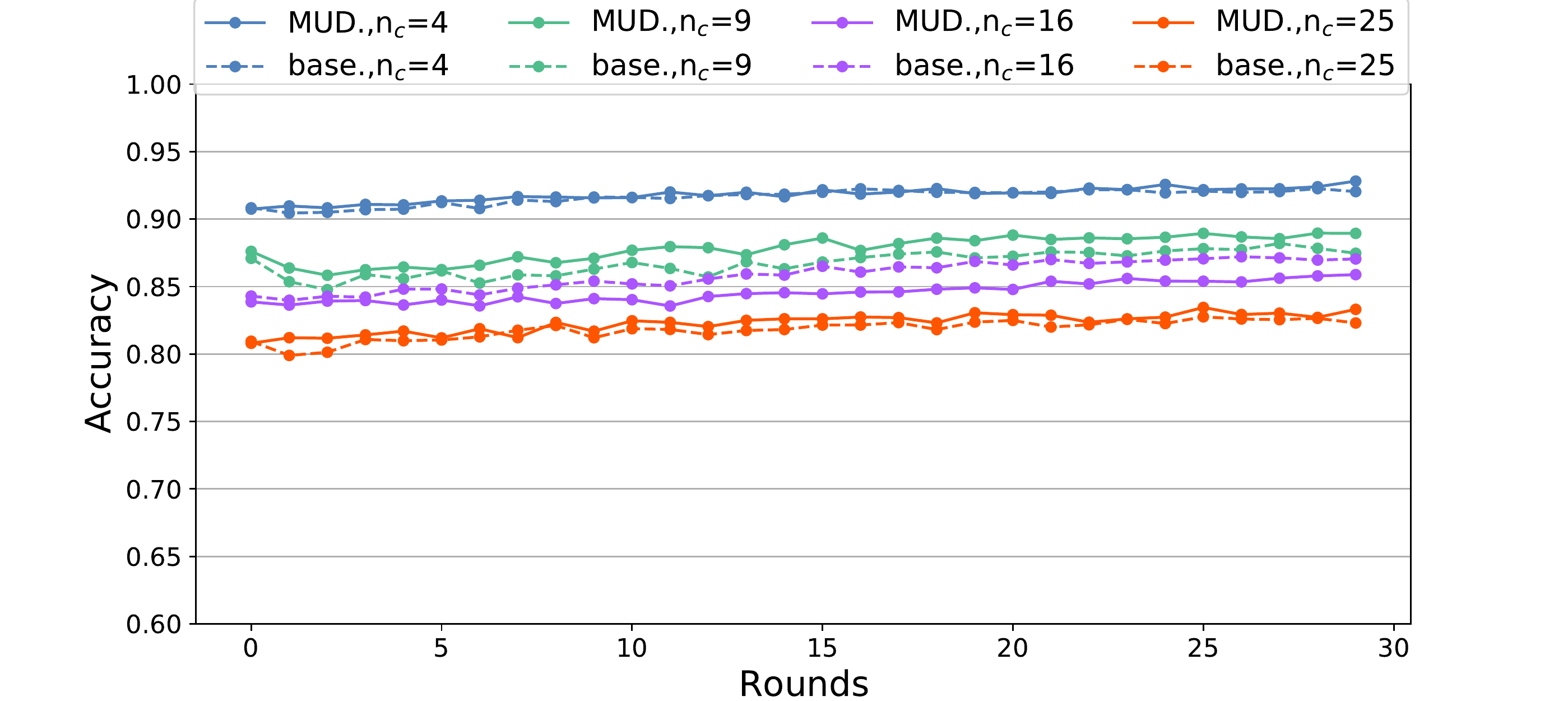}
    \caption{\centering Accuracy comparison of MUD-PQFed with baseline (MNIST).}
    \label{fig:binmnistbaseline}
\end{figure}

\subsection{Detection Capability}\label{sec:DetCap}

The above experiments validate the efficacy of the MUD-PQFed to retain the global model accuracy under existing of malicious clients performing model corruption attacks in comparison with the baseline.
Now we evaluate its effectiveness in terms of detection capability from true positive rate and false positive rate.

\noindent\textbf{True positive rate (TPR).} The TPR is the probability of malicious clients being correctly identified.

\noindent\textbf{False positive rate (FPR).} The FPR is the probability of a benign client being misidentified as a malicious client.

Before diving into experimental evaluations of the detection capability, we first give a theoretical analysis of the detection capability.

\begin{theorem}
\label{thm}
\textbf{(Detection Capability)} When $n_c$ clients are grouped by the $d^n$-hypermesh, the MUD-PQFed protocol can i) always provide a 100\% TPR, and ii) will not misidentify benign clients as malicious clients (thus 0\% FPR) conditioned on that the number of malicious clients is $\le$ $n$ and the \textit{equal sign} holds when at least two malicious clients are in the \textit{same group}. 
\label{th1}
\end{theorem}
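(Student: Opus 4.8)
The plan is to prove the two claims separately: the 100\% TPR follows by inspecting the three-round detection directly, while the 0\% FPR rests on a combinatorial argument about the hypermesh adjacency structure. Throughout I fix a round $t$ and work with the group sets $\mathcal{G}_i$, $\mathcal{U}_j$, the neighbor sets $\mathcal{N}_i$, and the flagged set $\mathcal{V}$ from Stage~3.

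For part (i), I would first observe that a malicious client $i$ performing the model corruption attack injects an out-of-range summand into the gradient it contributes to \emph{every} group $j \in \mathcal{G}_i$. Since the third-round check forms $\tilde{\bf w}_{j,t} = \sum_{i' \in \mathcal{U}_j} {\bf w}_{i',t}$ and tests membership in the legitimate range, the illegitimate summand pushes $\tilde{\bf w}_{j,t}$ out of range, so each of the $n$ groups in $\mathcal{G}_i$ is added to $\mathcal{V}$. The decision rule declares a client malicious exactly when all $n$ of its groups lie in $\mathcal{V}$; hence $i$ is flagged, and since $i$ was an arbitrary malicious client this gives a 100\% TPR. (The first- and second-round checks can only enlarge $\mathcal{V}$, so they never cause a miss.)

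For part (ii), fix a benign client $b$ with identifier $a_{n-1}\dots a_0$ and denote by $G_{b,k}$ its group along digit position $k$, so $\mathcal{G}_b = \{G_{b,0},\dots,G_{b,n-1}\}$; then $b$ is misidentified iff all $n$ of these groups lie in $\mathcal{V}$, and $G_{b,k}$ enters $\mathcal{V}$ only if it contains a malicious client. The crux is the adjacency fact that a malicious client $m$ lies in $G_{b,k}$ precisely when $m$ agrees with $b$ in every digit except position $k$, i.e. $m$ differs from $b$ in exactly the single position $k$. Thus each malicious client occupies \emph{at most one} group of $b$, so $n_a$ malicious clients flag at most $n_a$ of $b$'s $n$ groups. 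When $n_a < n$ at least one $G_{b,k}$ stays out of $\mathcal{V}$, so $b$ survives and the FPR is $0\%$.

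The remaining and most delicate step is the boundary case $n_a = n$ under the same-group hypothesis, which I would handle by contradiction. Suppose $b$ were misidentified; then its $n$ malicious clients cover all $n$ positions one per dimension, so $m_i$ differs from $b$ in exactly position $i$ for distinct $i$. For $i \ne j$, such $m_i$ and $m_j$ then differ from each other in both positions $i$ and $j$ (they agree with $b$, hence with each other, everywhere else), so they are \emph{not} neighbors and share no group. Consequently a full cover forces the $n$ malicious clients to be pairwise in distinct groups; contrapositively, if at least two malicious clients occupy a common group no full cover is possible, some $G_{b,k} \notin \mathcal{V}$, and $b$ is preserved even at $n_a = n$. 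The main obstacle is exactly this coupling between the neighbor relation among the malicious clients and their coverage of a benign client's $n$ dimensions; once the one-position-difference characterization of edges is established, the pigeonhole count and the boundary contradiction both go through cleanly.
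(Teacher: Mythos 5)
Your part (ii) argument follows the same route as the paper's own proof---the key fact in both is that two clients share at most one group, so $n_a$ malicious clients can place at most $n_a$ of a benign client's $n$ groups into $\mathcal{V}$---but yours is substantially more rigorous: the paper asserts ``two clients can belong to at most one group'' without justification and then reasons informally, whereas you derive it from the one-digit-difference characterization of edges, make the pigeonhole count explicit, and handle the boundary case $n_a = n$ by a clean contradiction showing that a full cover of $b$'s dimensions forces the malicious clients to be pairwise non-neighbors. That added detail is a genuine improvement; the paper's version leaves the reader to reconstruct exactly the coupling you spell out.

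For part (i), however, there is a gap---one the paper avoids only by not proving (i) at all. Your argument hinges on the step ``the illegitimate summand pushes $\tilde{\bf w}_{j,t}$ out of range,'' but this is not automatic: the third-round check only sees the \emph{sum} over the $d$ clients in a group, and a sufficiently small illegitimate perturbation can leave that sum inside the legitimate interval (the paper itself makes this point in its ``Hardness of Malicious Client Detection'' discussion, and its own experiments in Figs.~9--10 report TPR well below 100\% for small manipulation degrees or few tampered parameters). So the 100\% TPR claim, and your proof of it, implicitly require an additional hypothesis that the manipulation is large enough that the group aggregate leaves the legitimate range; you should state that assumption explicitly rather than treat the implication as unconditional. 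With that caveat added, the rest of your argument (all $n$ groups of a detected-as-anomalous malicious client enter $\mathcal{V}$, hence the decision rule flags it; rounds one and two only enlarge $\mathcal{V}$) is sound.
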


\begin{proof}
A group only contains clients with a difference of one(-bit), two clients can belong to at most one group. If a benign client is identified as a malicious client, then this client will be in $n$ groups in $\mathcal{V}$ at the same time, and there will be one malicious client in each group. So there are $n$ malicious clients in total and they are in different groups. Since we assume that the number of malicious clients is $\le$ $n$ and the equal sign holds when at least two malicious clients are in the same group, therefore, as long as the number of malicious clients is $\le$ $n$ and the equal sign holds when at least two malicious clients are in the same group, our scheme will not misidentify normal clients, guaranteeing a 0\% FPR.
\end{proof}

\begin{theorem}
It is impossible for malicious client $i$ to send a message to the aggregator in round $t$ such that ${\bf w}_{i,j,t}$ $\neq$ ${\bf w}_{i,j^{'},t}$ for any two groups $j,j^{'}\in \mathcal{G}_i$ such that the verification of the aggregator doesn't fail, assuming that the discrete logarithm problem is intractable in the group generated by $\textsf{g}$.
\end{theorem}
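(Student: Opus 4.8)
The plan is to argue by contradiction and reduce a successful equivocation to solving a discrete logarithm in the group generated by $\textsf{g}$. Concretely, suppose an efficient malicious client $i$ produces, in some round $t$, a transcript $\{(\mathbf{m}_{i,j,t},\mathbf{d}_{i,j,t})\mid j\in\mathcal{G}_i\}$ that is accepted by the aggregator (i.e.\ passes both the first-round check \eqref{eau4} and the second-round check \eqref{equ5}), yet encodes two distinct gradients $\mathbf{w}_{i,j,t}\neq\mathbf{w}_{i,j',t}$ for some $j,j'\in\mathcal{G}_i$, where the gradient effectively delivered to group $j$ is $\mathbf{w}_{i,j,t}=\mathbf{m}_{i,j,t}-s_{i,j,t}$. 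I would first make precise that ``the verification does not fail'' means exactly that these two displayed identities hold, and that the honest-but-curious assumption on the remaining clients (Threat Model) lets me treat every other commitment $\mathbf{d}_{q,j,t}$ as the truthful $\textsf{g}^{s_{q,j,t}}$ with $\sum_{i\in\mathcal{U}_j}s_{i,j,t}=0$.

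First I would use the first-round check to pin down the malicious commitments. Since \eqref{eau4} forces $\prod_{q\in\mathcal{U}_j}\mathbf{d}_{q,j,t}=1$ and every honest factor equals $\textsf{g}^{s_{q,j,t}}$, the malicious factor is forced to $\mathbf{d}_{i,j,t}=\textsf{g}^{s_{i,j,t}}$ (and likewise for $j'$); so even a cheating client cannot present a commitment whose exponent differs from its genuine secret share modulo the group order without already failing \eqref{eau4}. Next I would substitute these into the second-round identity \eqref{equ5}: acceptance means $\textsf{g}^{\mathbf{m}_{i,j,t}}\mathbf{d}_{i,j,t}^{-1}=\textsf{g}^{\mathbf{m}_{i,j',t}}\mathbf{d}_{i,j',t}^{-1}$, which after cancellation of the $\textsf{g}^{s_{i,j,t}}$ and $\textsf{g}^{s_{i,j',t}}$ factors collapses to $\textsf{g}^{\mathbf{w}_{i,j,t}}=\textsf{g}^{\mathbf{w}_{i,j',t}}$. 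Thus a successful equivocation yields two distinct exponents that map to the same group element.

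Finally I would close the argument by reading $\textsf{g}^{\mathbf{w}_{i,j,t}-\mathbf{w}_{i,j',t}}=1$ with $\mathbf{w}_{i,j,t}-\mathbf{w}_{i,j',t}\neq 0$ as a nontrivial relation in $\langle\textsf{g}\rangle$: an adversary that can efficiently exhibit such a pair can be turned into a solver for the discrete logarithm problem, since the binding property of the $\textsf{g}^{(\cdot)}$ commitment is precisely this hardness, contradicting the assumed intractability. The main obstacle I anticipate is this last reduction step rather than the preceding algebra: I must carefully build the discrete-logarithm solver that embeds a challenge into the generator (or into the honest clients' shares), simulates the honest transcript and the aggregator's view to the adversary indistinguishably, and extracts a logarithm from whichever equivocating pair it returns; relatedly, I must justify that the legitimate gradient range is smaller than $\mathrm{ord}(\textsf{g})$, so that equality of the committed group elements genuinely entails equality of the underlying integer gradients.
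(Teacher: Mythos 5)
Your proposal is correct and follows essentially the same route as the paper's own proof: the first-round check (Equation~\eqref{eau4}) pins each commitment to the genuine secret share, the second-round check (Equation~\eqref{equ5}) then forces $\textsf{g}^{{\bf w}_{i,j,t}}=\textsf{g}^{{\bf w}_{i,j',t}}$ for all $j,j'\in\mathcal{G}_i$, and hence the underlying gradients coincide. In fact your write-up is more careful than the paper's, which asserts the final equality directly from Equation~\eqref{equ5} without spelling out the binding/discrete-logarithm reduction or the requirement that the legitimate gradient range be smaller than $\mathrm{ord}(\textsf{g})$ --- both points you rightly flag as the real content of the argument.
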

\begin{proof}
First, if client $i$ or any neighbor $k$ $\in$ $\mathcal{N}_i$ fails to send a message, malicious client will be immediately detected by the aggregator. 
Second, the aggregator will follow Equation~\ref{eau4} to verify $\sum_{i \in \mathcal{U}_j}{s_{i,j,t}=0}$. Finally, from Equation \ref{equ5}, it can be seen that for fixed $t$ and $i$ $\in$ $\mathcal{U}$, all  ${\bf m}_{i,j,t}$- ${\bf s}_{i,j,t}$ for $j \in \mathcal{G}_j$ are equal. Therefore, by
definition of ${\bf w}_{i,j,t}$, all ${\bf w}_{i,j,t}$ for fixed $i \in \mathcal{U} $ and $t$ are also equal.
\end{proof}

\begin{table}[t!]
    \centering
    \caption{Detection Capability of $d^2$-hypermesh (MNIST)}
    \begin{tabular}{|c|c|c|c|c|c|}
    \hline
    \tabincell{c}{$n_c$}&\tabincell{c}{$d$} & \tabincell{c}{$n_a$} & \tabincell{c}{Same \\group} & \tabincell{c}{TPR} & \tabincell{c}{FPR} \\ 
      \hline
      4&2&1& \Checkmark &100\%&0.00\% \\ 
      \hline
     9&3&1&\Checkmark&100\%&0.00\% \\ 
     \hline
     9&3&2&\Checkmark&100\%&0.00\% \\ 
     \hline
     9&3&2&\XSolidBrush&100\%&28.60\% \\ 
     \hline
      16&4&1&\Checkmark&100\%&0.00\% \\ 
      \hline
      16&4&2&\Checkmark&100\%&0.00\% \\ 
      \hline
      16&4&2&\XSolidBrush&100\%&14.29\% \\ 
      \hline
     25&5&1&\Checkmark&100\%&0.00\% \\ 
     \hline
     25&5&2&\Checkmark&100\%&0.00\% \\ 
     \hline
     25&5&2&\XSolidBrush&100\%&8.69\% \\ 
      \hline
     64&8&1&\Checkmark&100\%&0.00\% \\ 
     \hline
      64&8&2&\Checkmark&100\%&0.00\% \\ 
     \hline
     64&8&2&\XSolidBrush&100\%&3.23\% \\ 
      \hline
     
    \end{tabular}
    
    \label{tab:Recognition Rate1}
\end{table}

\begin{table}[t!]
    \centering
    \caption{Detection Capability of $d^3$-hypermesh (MNIST)}
    \begin{tabular}{|c|c|c|c|c|c|}
    \hline
    \tabincell{c}{$n_c$}&\tabincell{c}{$d$} & \tabincell{c}{$n_a$} & \tabincell{c}{Same\\ group} & \tabincell{c}{TPR} & \tabincell{c}{FPR} \\  
      \hline
      8&2&1&\Checkmark&100\%&0\% \\ 
      \hline
      8&2&2&\Checkmark&100\%&0\% \\ 
      \hline
      8&2&2&\XSolidBrush&100\%&0\% \\ 
      \hline
      8&2&3&\Checkmark&100\%&0\% \\ 
      \hline
      8&2&3&\XSolidBrush&100\%&20\% \\ 
      \hline
     27&3&1&\Checkmark&100\%&0\% \\ 
      \hline
     27&3&2&\Checkmark&100\%&0\% \\ 
      \hline
     27&3&2&\XSolidBrush&100\%&0\% \\ 
     \hline
     27&3&3&\Checkmark&100\%&0\% \\ 
     \hline
     27&3&3&\XSolidBrush&100\%&0\%\\ 
     \hline
     64&4&1&\Checkmark&100\%&0\% \\ 
      \hline
     64&4&2&\Checkmark&100\%&0\% \\ 
      \hline
     64&4&2&\XSolidBrush&100\%&0\% \\ 
     \hline
     64&4&3&\Checkmark&100\%&0\% \\ 
     \hline
     64&4&3&\XSolidBrush&100\%&0\% \\ 
     \hline
    \end{tabular}
    
    \label{tab:Recognition Rate2}
\end{table}

\vspace{0.2cm}
\noindent{\bf $d^2$-hypermesh.} Table~\ref{tab:Recognition Rate1} details experimental results when the clients are divided according to the $d^2$ hypermesh with varying $d$. The MNIST is trained with the ternary FL. Same to previous setup, a malicious client changes 600 out of of the 24,330 total parameters by adding a sampled value in [20, 30] per tampered parameter. 
The $4_{\rm th}$ column indicates whether the malicious clients are in the same group. '\Checkmark' means that they belong to the same group; otherwise, '\XSolidBrush'.

The FPR is detailed in the $6_{\rm th}$ column.
 
For example, if $n_c$=16 and $n_a$=2---number of benign clients is 14, when two benign clients are misidentified as malicious clients, the FPR is 2/14=0.1429. 
From Table~\ref{tab:Recognition Rate1}, we can see that the increase of malicious clients $n_a$ does not affect TPR, which affirms that the MUD-PQFed can always correctly identify malicious clients regardless of malicious clients being in the same/different group. But whether malicious clients are in the same group will affect the FPR. When the $n_a=2$ that does not meet $n_a<n=2$, the FPR occurs when these malicious clients are in \textit{different} groups. 

\vspace{0.2cm}
\noindent{\bf $d^3$-hypermesh.} Table~\ref{tab:Recognition Rate2} details the detection capability when the $d^3$-hypermesh is applied, where other experimental settings are same to the above $d^2$-hypermesh. The major difference here is that the $n$ is set to be 3, therefore, the FPR will only occur when the number of clients is no less than 3 compared to 2 in the $d^2$ hypermesh. This means that to reduce the FPR, a larger $n$ is preferred to avoid falsely punishment for benign clients, while the TPR can always be guaranteed to be 100\% regardless $n$ to penalize malicious clients.

\begin{figure}[t!]
    \centering
    \includegraphics[width=0.35\textwidth]{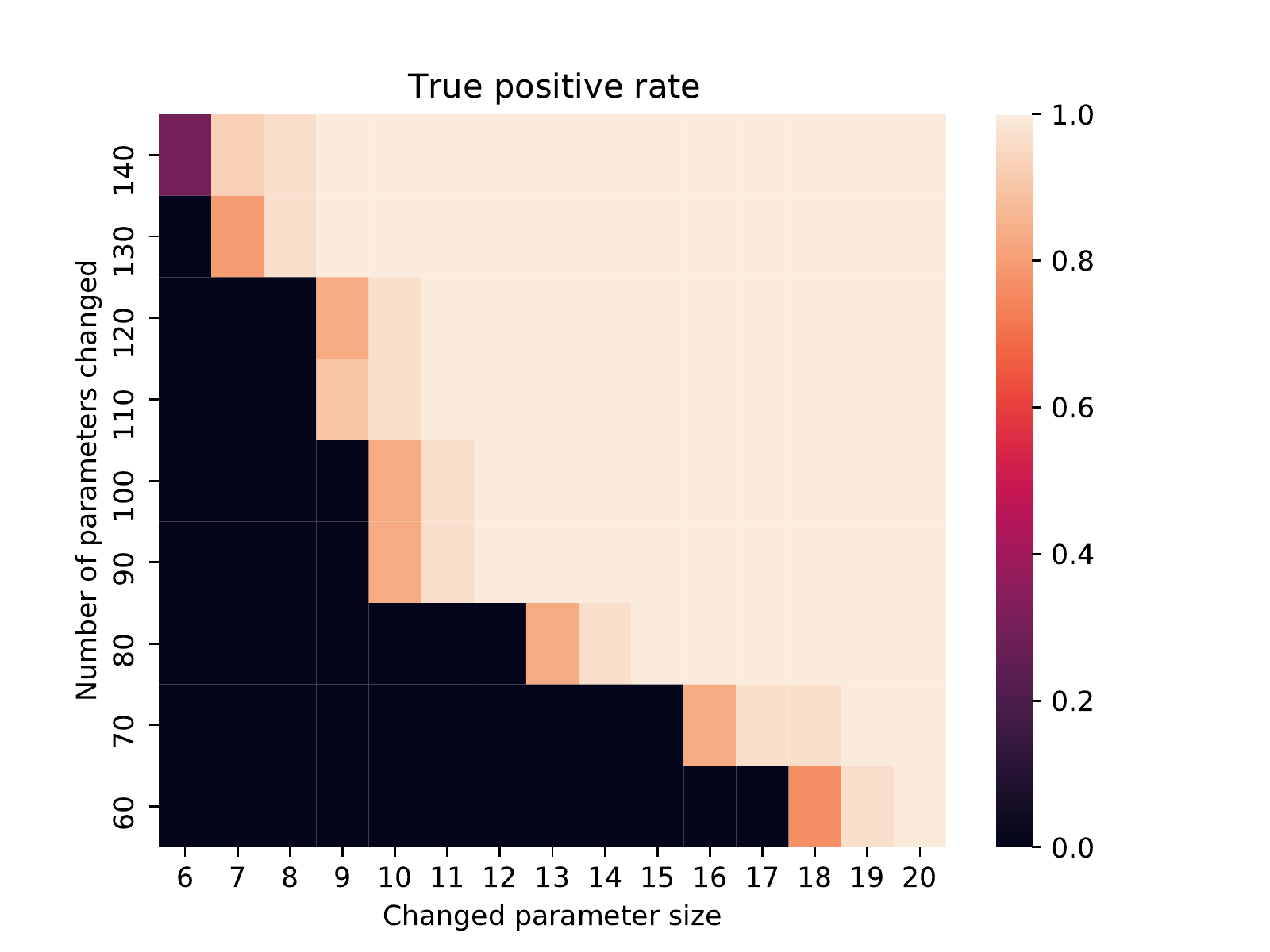}
    \caption{\centering True positive rate under different manipulation degree/size per parameter and number/fraction of parameter tampered (MNIST).}
    \label{fig:heatmap}
\end{figure}

\begin{figure}[t!]
    \centering
    \includegraphics[width=0.5\textwidth]{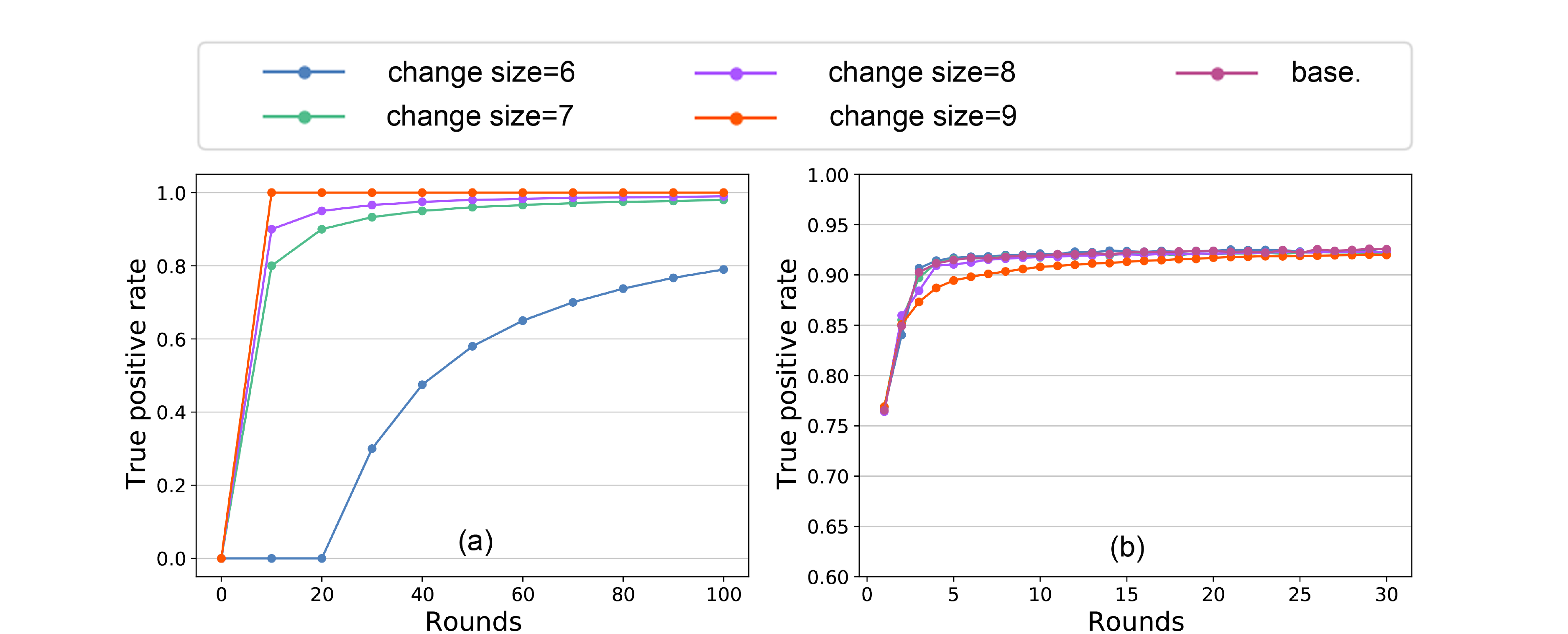}
    \caption{ True positive rate under different manipulation degree/size per parameter and number/fraction of parameter tampered (a).  Accuracy when the parameter change degree is small (b). (Ternary FL + MNIST).}
    \label{fig:TPR}
\end{figure}

\vspace{0.2cm}
\noindent{\bf Manipulation Degree on TPR.}
Fig. \ref{fig:heatmap} shows the TPR as a function of manipulation degree in terms of i) the number/fraction of parameters tampered and manipulation degree per parameter. The $n_c$ and $n_a$ are set to be four and one, respectively. 
The horizontal axis represents the degree per parameter changed, and the vertical axis represents the number of parameters tampered. 
It can be seen that with the increase in parameter degree change or/and the number of parameters changed, the TPR also increases whenever the attacker wants to successfully and efficiently perform the model corruption attack. 
The TPR is low due to limited number of tampered parameters or/and small degree change per parameter, as a result of the malicious tampered parameters can be concealed by the rest of values submitted from benign clients.
 
But note, at the same time, the attack impact on the global model accuracy is very small or even has no impact in our experiments, which can be trivially recovered by running a few more FL rounds once the attack disappears. That is, malicious clients cannot achieve the desired attack effect in this case. 

\vspace{0.2cm}
\noindent{\bf Attack Frequency on TPR.} Fig.~\ref{fig:TPR} depicts the relationship between the TPR and the attack frequencies (i.e., the number of rounds the malicious client submit manipulated parameters).
Here the number of parameters changed in set to be 140, while the degree per parameter is small, ranging from 6 to 9. There are two observations, larger manipulation degree, faster the attacker to be exposed in less number of rounds (i.e., about 10 rounds).
If the attacker wants to be stealthier (i.e., being detected slower), it has to lower the manipulation degree, which however could not succeed to corrupt the model accuracy. We validate the global accuracy corresponding to Fig.~\ref{fig:TPR} (a) in Fig.~\ref{fig:TPR} (b). It can be seen that when \textit{change size=6}, although our scheme cannot detect malicious clients in the first 20 rounds, the impact of malicious clients on the accuracy is almost none at this time. At the same time, we are able to detect malicious clients even though the attack has no accuracy impact after 20 rounds.

\section{Discussion}\label{sec:discussion}

\subsection{Group Contribution Deletion}
Though the MUD-PQFed, as the first work to our knowledge, considers the model corruption attack against quantized FL via illegitimate parameter submission and can fine-grained detect these attackers, it has to delete the parameters belong to the detected malicious client's group(s) to remove the attack effect in the aggregation stage. An ideal case should only delete the submitted value of the malicious client instead of the malicious client belonged group's parameters.

Notably, deleting the parameters in the malicious client belonged groups has no or merely negligible effect on the global model accuracy compared to the baseline, as we have extensively affirmed. This is because the FL itself can tolerate client leave, e.g., partial clients participation caused by `straggler' in the FL can still guarantee the model convergence to a satisfactory accuracy~\cite{gao2021evaluation}.

\subsection{Incorporating Incentive}
The main advantage of MUD-PQFed is to accurately identify malicious clients, and then the incentive mechanism can be incorporated to punish malicious clients in the FL. Note that FL relies on multiple local clients to jointly train the global model, the quality of client (i.e., data quality) will severely affect the global model utility. Nonetheless, clients, especially those high quality clients, are loathed to participate and share their own updates without adequate compensation and rewards. 

Firstly, clients participating in FL may have privacy concerns because of privacy inference attacks even if the local data is not accessed by the server and other clients~\cite{zhou2022ppa}.

Secondly, there will be system cost incurred to the client during participation in the FL. More specifically, it will inevitably consume the computational and communicational resources of the client.

Incentives can be mainly divided into positive incentives ~\cite{kang2019incentive,zhan2020learning} and reverse penalties ~\cite{gao2021fifl,bao2019flchain}. Positive incentives motivate clients by rewarding them, while negative incentives avoid malicious behavior by punishing individuals. So we can use the reverse incentive mechanism to punish the identified malicious clients, while rewarding the normal clients with the positive incentive mechanism 
If it only needs to eliminate the influence of malicious client submitted values on the final aggregation result, coarse-grained detection can be applied by simply grouping the clients participating in the aggregation and then check whether their aggregated values in a normal range per group.
Then the submitted parameters of the entire group is removed before aggregation.
 
A positive incentive mechanism can attract high-quality clients to actively participate. In addition, the MUD-PQFed can accurately identify malicious clients under the condition of ciphertext that mitigates potential privacy leakages posed in plaintext FL setting. Furthermore, the MUD-PQFed is designed for quantized FL, which can greatly reduce the communication overhead of the clients, which allows those with limited bandwidth to contribute.

\subsection{Communication Overhead}
As for the communicational overhead, the MUD-PQFed overhead has a $2\times n$ relationship of $n$ in $d^n$-hypermesh. More specifically, for a $d^n$-hypermesh, each client belongs to $n$ groups at the same time, so $n$ gradient value copies and commitments need to be submitted. Without MUD-PQFed (i.e., in plaintext quantized FL), client only needs to submits a single gradient value. We have evaluated the one-round communication overhead when quantized parameters are in plaintext and ciphertext (with applying MUD-PQFed for privacy preserving aggregation), respectively. In the experiment we take $n_c=4$ with $n=2, d=2$ and consider the quantized parameters. For ternary FL, the overhead without MUD-PQFed and the overhead with MUD-PQFed are 15.6~KB and 62.608~KB, respectively. For binary FL, the overhead without MUD-PQFed and overhead with MUD-PQFed are 26~KB and 104.208~KB, respectively. It can be seen that $4\times = 2\times n$ overhead increase is held with $n=2$.

Because for $n_c$=4, clients are divided according to the $2^2$-hypermesh. Each client belongs to two groups at the same time when MUD-PQFed is used, so each client needs to submit the gradients and commitments to two groups at the same time. In contrast, each client only needs to submit one gradient when MUD-PQFed is not used. Notably, such a $2\times n$ (i.e., 8 given $n=4$) overhead is still magnitude smaller than the overhead incurred by applying HE, which can be hundreds of times overhead increase compared to the plaintext counterpart~\cite{popa2011cryptdb,fang2021privacy}.

\section{Conclusion} \label{sec:conclusion}
 
Though the lightweight cryptographic (in particular, the secret sharing enabled MPC) based privacy-preserving FL can efficiently prevent privacy leakage, the security threats in this setting are not well elucidated. This work is the first study showing the trivialness of conducting model corruption attacks against the communication-efficient quantized FL. We then devise the MUD-PQFed to accurately detect malicious clients to eliminate such attacks. Extensive experiments have validated the efficacy and effectiveness of our MUD-PQFed.

\bibliographystyle{IEEEtran}
\bibliography{Reference}

\vspace{-1.0cm}

\end{document}